\documentclass[11pt]{article}
\usepackage{epsfig,amssymb,graphicx,amsmath,amsthm,subcaption}
\usepackage{tikz}
\usetikzlibrary{shapes.geometric, arrows}
\usetikzlibrary{positioning}

\tikzstyle{P} = [circle, minimum width=1.5cm, minimum height=1cm,text centered, draw=black, fill=green!30]
\tikzstyle{S} = [circle, minimum width=1.5cm, minimum height=1cm,text centered, draw=black, fill=blue!30]
\tikzstyle{I} = [circle, minimum width=1.5cm, minimum height=1cm,text centered, draw=black, fill=red!30]
\tikzstyle{W} = [circle, minimum width=1.5cm, minimum height=1cm,text centered, draw=black, fill=gray!30]
\tikzstyle{R} = [circle, minimum width=1.5cm, minimum height=1cm,text centered, draw=black, fill=yellow!30]
\tikzstyle{line} = [thick,-,>=stealth]
\tikzstyle{arrow} = [thick,->,>=triangle 90]
\tikzstyle{openA} = [thick,<-,>=open triangle 90]
\tikzstyle{openB} = [thick,->,>=open triangle 90]
\pgfarrowsdeclarecombine{twotriang}{twotriang}%
{triangle 90}{triangle 90}{triangle 90}{triangle 90}
\tikzstyle{Darrow} = [thick,double,->,>=triangle 90]
\tikzstyle{sarrow} = [thick,->,>=triangle 90,shorten >=2cm]
 
\newtheorem{theorem}{Theorem}[section]
  \newtheorem{lemma}[theorem]{Lemma}

  \setlength{\topmargin}{-0.5cm} \setlength{\textheight}{9in}
\setlength{\textwidth}{17cm}
\setlength{\evensidemargin}{-0.5cm}
\setlength{\oddsidemargin}{-0.5cm} 

\begin{document}

\title{Time-delayed model of immune response in plants}

\author{G. Neofytou,\hspace{0.5cm}Y.N. Kyrychko,\hspace{0.5cm}K.B. Blyuss\thanks{Corresponding author. Email: k.blyuss@sussex.ac.uk} 
\\\\ Department of Mathematics, University of Sussex, Falmer,\\
Brighton, BN1 9QH, United Kingdom}

\maketitle

\begin{abstract}
In the studies of plant infections, the plant immune response is known to play an essential role.
In this paper we derive and analyse a new mathematical model of plant immune response with particular
account for post-transcriptional gene silencing (PTGS). Besides biologically accurate representation
of the PTGS dynamics, the model explicitly includes two time delays to represent the maturation time of
the growing plant tissue and the non-instantaneous nature of the PTGS. Through analytical and numerical
analysis of stability of the steady states of the model we identify parameter regions associated with recovery
and resistant phenotypes, as well as possible chronic infections. Dynamics of the system in these regimes
is illustrated by numerical simulations of the model.
\end{abstract}

\section{Introduction}

One of the major challenges in supporting a growing human population and satisfying a demand for sustainable food and fuel resources is the understanding of how various diseases affect growth and development of plants \cite{Tilman2011,Strange2005}. From a mathematical perspective, significant efforts have been aimed at qualitative and quantitative analysis of the plant disease dynamics, including the environmental impact and its effects on the global yield of crops. Jeger et al. \cite{Jeger2004} give an overview of some of the quantitative approaches employed in plant virus epidemiology throughout the 20th century. Many mathematical models have focussed on the spread of infection by considering populations of healthy and infected plants, with disease transmission occurring through some intermediary. Since disease propagation in plants is mainly carried out by insect vectors \cite{Purcell2005}, a lot of these models incorporate a vector population either explicitly or through empirically derived  relationships between the carriers and the plant hosts. Other models have investigated the effects of traditional disease controls, such as, roguing and replanting, where any plants carrying a disease are removed and replaced  with healthy new plants \cite{Chan1994,VandenBosch1996,Zhang2012a}. Due to the significant role played by vectors in plant disease transmission, some work has been done on the analysis of their various behaviours, vector aggregation, and the existence of helper viruses that mediate viral transmission \cite{Zhang2000,Zhang2000a}.

Besides vector behaviour, another very important aspect of plant pathology is the dynamics of the plant immune system and regulatory functions. Mathematical models capable of explaining the interactions between plant pathogens and the immune system provide valuable insights into better engineering of genetically modified crops, which are characterized by artificial resistance to specific infections and better adaptation to environmental conditions. Such models also help identify optimal ways of introducing genetically modified crops into the environment in order to minimize any harmful consequences.

Unlike the mammalian immune system, the plant immune system does not possess any form of mobile defence and, therefore, has to rely solely on cellular innate immunity to deal with infections. Moreover, it also exhibits many plant-specific characteristics \cite{JD06}. The plants showing a recovery phenotype to a specific viral infection initially become affected but later experience new growth that is progressively more resistant to the virus until they finally produce new virus-free leaves with complete immunity. An example of such process starts with the plant going into hypersensitive resistance, triggering the self-destruction of infected cells, with necrotic tissue forming at and around the infection site. The cells surrounding these
necrotic lesions are usually found in the antiviral(resistant) state. Although some of them may contain traces of the virus, the virus is unable to replicate \cite{gerg06,frit87}. This can be explained by the {\it post-transcriptional gene silencing} (PTGS),
or RNA interference \cite{Waterhouse1999,Escobar2000,VBF01,Waterhouse2001}, which is a very important gene regulator and a major component of the adaptive immune system. PTGS is characterized by the ability to induce sequence-specific degradation of target messenger RNA (mRNAs) and methylation of target gene sequences. It has been demonstrated that PTGS is mediated by long, perfect or imperfect double-stranded RNAs (dsRNA) produced from either an inverted-repeat transgene or a replicating virus. In the case of viral infection, the core pathway is described as follows: when the viral dsRNA  is injected into the cell, it is targeted by up to four different dicer-like enzymes (DLC),  which chop the viral RNA into short 21-26 nucleotide (nt) long molecules. These molecules, known as short interfering RNAs (siRNA) or microRNAs (miRNA), are used as the building blocks for assembling a special  protein complex called RNA-induced-silencing complex (RISC) that is able to recognise and cleave RNAs containing complementary sequences to the short RNAs  forming their structure. This results in the translation arrest of the viral genome which prohibits viral replication within the cell and prevents the spread of the infection \cite{AlessandraTenorioCosta2013,Hammond2000,Bernstein2001}.
 
Existing mathematical models of PTGS have primarily focused on the intra-cellular aspects of the degradation caused by RNA interference \cite{Bergstrom2003,Raab2004,Groenenboom2005,Cuccato2011}. Most of these models are based on systems of differential equations that describe the dynamics of different RNA populations over time, sometimes also including certain amplification pathways depending on the type of cells being analysed.
Although PTGS has been extensively studied as a gene regulator, its significance as an integral part of the plant immune system so far has not been studied mathematically. In this paper we consider a model of plant disease within a single host with account for PTGS. Instead of explicitly including all the complexity of RNA interference, we propose a slightly simplified approach that is still able to take into account main aspects of PTGS. In fact, we will show that the model can provide an adequate qualitative description of a plant immune response to a viral infection, and support the main types of observable plant disease dynamics, including resistant and recovery phenotypes.

The outline of the paper is as follows. In the next section we discuss the detailed mechanism of PTGS and derive the corresponding mathematical model. In Section 3 we identify all steady states of the model together with conditions for their biological  feasibility. Sections 4 and 5 are devoted to the analysis of stability of these steady state. Section 7 contains results of numerical stability calculations, as well as numerical simulations of the model to illustrate different types of dynamical behaviour that are qualitatively consistent with recovery, resistance and chronic infection phenotypes. The paper concludes with discussion of results and future outlook.

 \section{Derivation of the model}

As a first step in the derivation of a mathematical model for interactions between plant cells and a viral infection, we divide the host population of cells  $N(t)$ into the classes of susceptible cells $S(t)$ consisting of mature cells that are able to induce RNA interference and are susceptible to infection, infected cells $I(t)$ that spread the infection, recovered cells $R(t)$ that are no longer infectious, warned cells $W(t)$ that emerge from susceptible cells upon them receiving the silencing signal, and proliferating cells $P(t)$ that become susceptible to infection after reaching maturity. All possible transitions between these cell populations are illustrated in  Fig.~\ref{sys_dia}. 
 
The effective transmission rate between infected cells and the susceptible cells is given by the parameter $\lambda$, which is taken to be a cumulative parameter accounting for different aspects of the virus life cycle, as well as the actual process of infections. Infected cells are assumed to recover at a rate $\sigma$ as a result of RISC-mediated cleavage or RNA-directed DNA methylation (RdDm) of the viral genome, depending on whether it is an RNA or a DNA virus \cite{Raja2008}. Average mortality rates of non-infected and infected cells are denoted by $\epsilon$ and $z$, respectively, where the infected cells are generally expected to exhibit a reduced lifespan compared to healthy cells, i.e. $z>\epsilon$.
 
A crucial aspect of the PTGS mechanism is that it cannot be maintained indefinitely in all parts of the plant. Laboratory studies have shown that the silenced state cannot be inherited directly, meaning that a parent cell will most likely be unable to produce daughter cells with the anti-viral components needed to deal with the viral infection. It is, therefore, believed that undifferentiated and proliferating cells, e.g meristematic tissue, need to mature or be released from cellular reproduction before they can acquire an antiviral state \cite{Mitsuhara2002}.  Hence, we introduce $P(t)$ as the population of proliferating cells that are responsible for promoting new plant growth. The generation of these new cells depends on the availability of mature cells that are responsible for the collection of nutrients, and the generation rate of new cells will be denoted by $k$.  Recovered cells, although mature, are excluded from contributing to the development of new growth since the loss of function experienced during a viral infection can often cause devastating and irreparable damage to the cell. The proliferating cells have the average maturity time $\tau_1$, after which they are recruited to the susceptible class. The property of non-inheritance is also true for many viral infections, as it is highly unlikely that plant viruses can produce progeny in proliferating cells, in which the silencing state cannot be maintained. One possible explanation for this is the presumed anti-dsRNa activity during cellular mitosis which interferes with the production of dsRNA required for the transmission of PTGS and the replication of a virus \cite{MiassarM.Al-Taleb2011,Taskn2013}. Thus, the cell population $P(t)$ will be assumed to have both immunity to viral infection and the inability to express RNA interference.
 
Evidence suggest that  RISC-mediated cleavage of target transcripts only requires the presence of  21-nt siRNAs, whereas a 25-nt siRNA may also induce RNA methylation and the long-distance transmission of the silencing signal \cite{Molnar2011}. From a molecular point of view, it has been suggested after the initiation of silencing, the primary 2-nt siRNA produced inside the cell can move into surrounding cells regardless of whether they contain any homologous transcripts. In the case where a receiving cell contains homologous transcripts, a second wave of 21-nt siRNA could be synthesized  by using these transcripts as templates. Unlike the first wave, the production of a second wave of siRNA does not require the use of a dicer enzyme but relies on the recruitment of RNA-dependent RNA polymerase (RDR). The importance of this RDR-mediated phase is that it amplifies the silencing signal, and, as a result, these secondary RNAs could be the agents responsible for the systemic movement of the RNA silencing signal \cite{Wassenegger2000,Zhang2012}. In the light of these observations, we consider the class of warned cells $W(t)$ that represent a subgroup of susceptible cells which have successfully acquired immunity to a viral infection by being the recipients of siRNA originating from infected cells. These cells are assumed to express the antiviral components prior to infection, and by doing so, they are capable of degrading the viral genome without any viral interference \cite{Waterhouse2001}.

It is widely understood that  pathogens are capable of eliciting, suppressing or delaying the PTGS response of the plant, and that the induction of PTGS is not  instantaneous \cite{Waterhouse1999,Sonoda2000}. Recent studies have shown that viruses are capable of producing highly specific viral proteins  able  to interfere with the many different stages of the RNA-degradation mechanism \cite{AlessandraTenorioCosta2013,Raja2008,Roth2004,Canizares2008, Llave2000,Burgyan2011}. Taking this into account, in this  model we assume that the propagating signal is initiated by the induction of PTGS in infected cells, but it will, however, be treated independently of whether the infected cells can recover or not. This will allow us to investigate specific cases where a virus can avoid  silencing within the occupied cell but cannot prevent the propagation of the warning signal to other surrounding cells, and vice versa \cite{Zvereva2012}. Hence, the effective warning rate between infected and other target cells will be denoted by $\delta$. We introduce time delay $\tau_2$ to model the average time a cell remains infected before the propagating component of PTGS reaches its target. This is also a cumulative parameter having contributions from viral interference, specific thresholds in dsRNA accumulation necessary for initiation, inherent delay of activation or the transportation delay of involved components. Other infected cells are also assumed to be the recipients of this signal, hence,  $\phi$ will denote the effective rate at which silencing of infected cells can be amplified. Hence, for any time $t$, we assume that $\delta I(t-\tau_2)$ is the signal that has reached susceptible and infected cells, so multiplying this with $S(t)$ gives the number of susceptible cells that become warned by the PTGS signal, whereas multiplying with $I(t)$ and the amplification factor $\phi$ gives the total
number of infected cells that are silenced by the propagating PTGS. This is consistent with the notion of the dsRNA dosage dependence of PTGS: once the virus infects a cell and starts reproducing, it is believed that enough viral dsRNA has to accumulate before PTGS can take place \cite{TF2002}. However, if an infected cell receives additional antiviral components from other neighbouring cells, it is reasonable to assume  that degradation of the viral genome could be initiated either sooner or more efficiently, therefore a stronger immune  response might be possible. 
 
We assume that in the absence of infection the population of susceptible cells should be bounded. To account for this in the model, the population of susceptible cells is taken to decrease at a rate $\epsilon S^2$, where $\epsilon$ is the death rate of non-infected cells, as introduced earlier. Effectively, this corresponds to a logistic growth for susceptible cells, which has been successfully used in other models for the spread of viral infections \cite{Camp61,BK01,PN01,GK05}. Different forms of growth of susceptible cells are discussed in De Leenheer and Smith \cite{LS03} who also provide arguments for only including susceptible cells into the competition term of the logistic growth, in a manner similar to the `Campbell model' \cite{Camp61,BK01}.  From a biological perspective this can effectively account for an additional defensive strategy employed by the plant. Studies suggest that plants afflicted with disease are often able to demonstrate a flexible resource allocation \cite{SJC13,Berg07}. This regulatory function is believed to be a highly complicated process operating through often contradictory channels and is currently not fully understood. However, the core idea is that while pathogens will try to absorb as much nutrients as possible, the plant can dynamically transfer resources from one location to another to either suppress microbial growth or accommodate a defensive response \cite{SJC13,Rojas2014,Ehn1997}. Hence, it seems reasonable to assume that during a viral infection, warned and infected cells will be given priority over resources in order to mount and sustain a proactive and reactive defensive  response respectively, whereas susceptible cells will have to compete with each other.
 
 \begin{figure}[H]
 \centering
\begin{tikzpicture}
\node (start) [P] {$P$};
\node (S) [S, right = 3cm of  start] {$S$};
\node (I) [I, right = 3cm of S] {$I$};
\node (W) [W, above = 1cm of  S] {$W$};
\node (R) [R, right = 3cm of  W] {$R$};
\coordinate[below = 1cm of start] (BP);
\coordinate[right = 1cm of  I] (RI);
\coordinate[below = 1cm of   S] (BS);
\coordinate[below = 1cm of  I] (BI);
\coordinate[above = 1cm of  W] (AW);
\coordinate[above = 1cm of  R] (AR);

\draw [arrow] ([yshift=1ex]start.east) -- node[anchor=south] {$e^{-\epsilon \tau_1}$} ([yshift=1ex]S.west) ;
\draw [openA] ([yshift=-1ex]start.east) -- node[anchor=north] {$k$}([yshift=-1ex]S.west) ;
\draw [arrow] (S.north) -- node[anchor=east] {$\delta$} (W.south);
\draw [arrow] (S.east) -- node[anchor=south] {$\lambda$} (I.west) ;
\draw [openB] (W.west) -- node[anchor=south]{$k$} (start.north);
\draw [arrow] (I.north) -- node[anchor=east]{$\sigma$} (R.south);

\draw [arrow] (start) -- node[anchor=east]{$\epsilon$} (BP);
\draw [-twotriang] (S) -- node[anchor=east]{$\epsilon$} (BS);
\draw [arrow] (I) -- node[anchor=east]{$z$} (BI);
\draw [arrow] (W) -- node[anchor=east]{$\epsilon$} (AW);
\draw [arrow] (R) -- node[anchor=east]{$\epsilon$} (AR);
\draw [arrow] (I.east) -- (RI)|- node[anchor=west]{$\delta\phi$}(R.east);
\end{tikzpicture}
\caption{A diagram of plant immune response within an extended SIR framework. $P$, $S$ and $W$ denote the populations of immature, mature and warned cell whereas $I$ and $R$  stand for infected and recovered cells respectively. Black and white arrowheads represent the direction of recruitment and contribution rates respectively, from one class of cells to another. Note that the  population of susceptible cells $S$ dies at a rate $-\epsilon S^2$, driven by cells competing for available resources, where $\epsilon$ is the natural death rate of plant cells.}\label{sys_dia}
\end{figure}
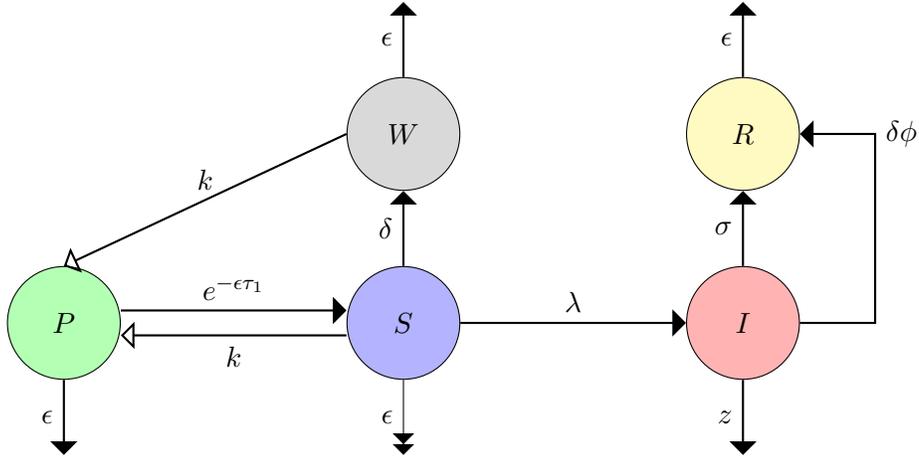

The system describing the dynamics of interactions between plant cells and a viral infection takes the following form
\begin{equation}\label{eq:system_one}
\begin{array}{l}
\displaystyle{\frac{dP}{dt}  = \hspace{0.1cm}  k[S(t)+W(t)] -P_s(t)-\epsilon P(t),}\\\\
\displaystyle{\frac{dS}{dt} = \hspace{0.1cm} P_s(t) - \lambda S(t)I(t)-\delta S(t)I(t-\tau_2) -\epsilon S(t)^2,}\\\\
\displaystyle{\frac{dI}{dt} = \hspace{0.1cm} \lambda S(t)I(t) - (z+\sigma)  I(t)  - \delta\phi I(t) I(t-\tau_2), }\\\\
\displaystyle{\frac{dR}{dt} = \hspace{0.1cm} \sigma I(t) + \delta\phi I(t) I(t-\tau_2) - \epsilon R(t),}\\\\
\displaystyle{\frac{dW}{dt} =\hspace{0.1cm} \delta S(t)I(t-\tau_2) - \epsilon W(t),}
\end{array}
\end{equation}
where $P(t)$, $S(t)$, $I(t)$, $R(t)$ and $W(t)$ denote the populations of proliferating, susceptible, infected, recovered and warned cells, respectively, and
\[
P_s(t)=ke^{-\epsilon \tau_1}[S(t-\tau_1)+ W(t-\tau_1)]
\]
represents the  population of undifferentiated cells that were born at time $t-\tau_1$, have survived for the period of time $\tau_1$ in the class of proliferating cells, and upon maturation move into the class of susceptible cells at time $t$. For biological reasons, system (\ref{eq:system_one}) is augmented with non-negative
initial conditions
\begin{equation}\label{ICs}
\begin{array}{l}
S(s) = S_0(s)> 0, \quad W(s) = W_0(s)\ge 0\hspace{0.3cm}\mbox{for all}\hspace{0.3cm}s \in [-\tau_1,0],\quad P(0)\ge 0,\\\\
I(s) = I_0(s)\ge 0 \hspace{0.3cm}\mbox{for all}\hspace{0.3cm} s \in [-\tau_2,0), \hspace{0.3cm}\mbox{with}\hspace{0.3cm}I(0)>0,\quad R(0)\ge 0.
\end{array}
\end{equation}

\begin{table}[H]
\caption{Table of  parameters}
\begin{tabular}{|  p{.1\linewidth}   p{.58\linewidth}  p{.2\linewidth}| }
\hline
Symbol & Definition & Baseline values\\
\hline 
$\quad\lambda$  & Rate of infection & $1.5$ \\
$\quad k$  & Growth rate & $1$ \\
$\quad\sigma$  & Recovery rate  & $0.5$\\
$\quad\delta$  & Propagation rate of silencing signal & $0.5$ \\
$\quad\phi$     & Amplification factor of recovery & $1$ \\
$\quad\epsilon$ & Natural death rate of cells & $0.3$ \\
$\quad z$        & Death rate of infected cells & $0.6$\\
$\quad\tau_1$   & Maturation time of proliferating tissue & $1$\\
$\quad\tau_2$   & PTGS propagation delay  & $1$\\ 
\hline
\end{tabular}
\label{tab:List of parameters}
\end{table}

The system (\ref{eq:system_one}) can be reduced to the following closed system of equations
\begin{equation} \label{eq:system_two}
\begin{array}{l}
\displaystyle{\frac{dS}{dt} = \hspace{0.1cm} k[S(t-\tau_1)+ W(t-\tau_1)]e^{-\epsilon \tau_1} - S(t)[\lambda I(t)+\delta I(t-\tau_2) +\epsilon S(t)],}\\\\
\displaystyle{\frac{dI}{dt} = \hspace{0.1cm} I(t)[\lambda S(t) - (z+\sigma)    - \delta\phi  I(t-\tau_2)],}\\\\
\displaystyle{\frac{dW}{dt} = \hspace{0.1cm} \delta S(t)I(t-\tau_2) - \epsilon W(t).}
\end{array}
\end{equation}
The remaining two variables $P(t)$ and $R(t)$ are determined by the solutions of this reduced system through
\begin{equation}\label{PRsol}
\begin{array}{l}
\displaystyle{P(t)=P(0)e^{-\epsilon t}+k\int_{t-\tau_1}^{t}[S(x)+W(x)]e^{-\epsilon(t-x)}dx,}\\\\
\displaystyle{R(t)=R(0)e^{-\epsilon t}+\int_0^t[\sigma I(x)+ \delta\phi I(x) I(x-\tau_2)]e^{-\epsilon(t-x)}dx.}
\end{array}
\end{equation}

Since model (\ref{eq:system_one}) and its reduced version (\ref{eq:system_two}) describe the dynamics of cell populations over time, it is essential from a biological perspective for all cell populations to remain non-negative and bounded, as given by the following results.\\

\begin{theorem}
Solutions $P(t),S(t),I(t),W(t),R(t)$ of the system (\ref{eq:system_one}) and $S(t),I(t),W(t)$ of the system (\ref{eq:system_two}) with initial conditions (\ref{ICs}) are non-negative for all $t\geq 0$.
\end{theorem}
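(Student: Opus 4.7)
The strategy is the method of steps applied to the reduced system (\ref{eq:system_two}) on successive intervals of length $\tau:=\min\{\tau_1,\tau_2\}>0$, followed by direct use of the explicit formulas (\ref{PRsol}) to deduce $P,R\ge 0$. On the first step $[0,\tau]$ the delayed arguments $t-\tau_1$ and $t-\tau_2$ land in the initial history (\ref{ICs}), so $S(t-\tau_1)$, $W(t-\tau_1)$ and $I(t-\tau_2)$ are prescribed non-negative continuous functions, with $S_0$ strictly positive. The infected-cell equation is homogeneous linear in $I(t)$ and integrates to
\begin{equation*}
I(t)=I(0)\exp\!\left(\int_0^t\bigl[\lambda S(s)-(z+\sigma)-\delta\phi\, I(s-\tau_2)\bigr]ds\right),
\end{equation*}
which is strictly positive because $I(0)>0$. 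Once $S\ge 0$ is secured (see the next paragraph), the $W$-equation is linear in $W$ with coefficient $-\epsilon$ and non-negative forcing, so the integrating factor $e^{\epsilon t}$ produces
\begin{equation*}
W(t)=W(0)e^{-\epsilon t}+\int_0^t \delta\, S(s)\, I(s-\tau_2)\, e^{-\epsilon(t-s)}ds\ge 0.
\end{equation*}

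The main obstacle is $S$: its equation is nonlinear in $S$ and its production term depends on the past of $S$ itself, so neither an integrating factor nor a simple comparison argument applies directly. I will handle it by a first-touch argument. Assume, for contradiction, that $S$ attains a non-positive value somewhere in $[0,\tau]$, and let $t^\ast\in(0,\tau]$ be the least such time. By continuity $S(t^\ast)=0$, $S(t)>0$ for $t<t^\ast$, and therefore $\dot S(t^\ast)\le 0$. Substituting $S(t^\ast)=0$ into the $S$-equation of (\ref{eq:system_two}) kills the bilinear terms $\lambda SI$ and $\delta S\,I(\cdot-\tau_2)$ as well as the logistic term $\epsilon S^2$, leaving $\dot S(t^\ast)=ke^{-\epsilon\tau_1}[S_0(t^\ast-\tau_1)+W_0(t^\ast-\tau_1)]>0$ because $S_0$ is strictly positive on $[-\tau_1,0]$ and $t^\ast-\tau_1$ lies there (since $\tau\le\tau_1$). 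This contradiction forces $S>0$ throughout $[0,\tau]$, which in turn validates $W\ge 0$ from the displayed formula above.

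Induction on $[n\tau,(n+1)\tau]$ extends the conclusion to all $t\ge 0$: on each step the solution on the previous interval serves as the new history, and since it already satisfies $S>0$, $I>0$, $W\ge 0$, the three arguments above go through verbatim (the inequality $\tau\le\tau_1,\tau_2$ guarantees that the delayed arguments fall into intervals on which non-negativity has already been proved). Finally, for the full system (\ref{eq:system_one}) the variables $P$ and $R$ are recovered from the variation-of-constants formulas (\ref{PRsol}); their integrands $k[S(x)+W(x)]$ and $\sigma I(x)+\delta\phi\, I(x)I(x-\tau_2)$ are non-negative by what has just been established, and $P(0),R(0)\ge 0$ by assumption, so $P(t)\ge 0$ and $R(t)\ge 0$ for all $t\ge 0$.
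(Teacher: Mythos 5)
Your proof is correct where it applies, but it is genuinely different from what the paper does: the paper gives no explicit argument at all, remarking only that the result ``can be proven using standard techniques'' and follows from Theorem 5.2.1 of Smith (1995), i.e.\ from the general quasipositivity criterion for invariance of the non-negative orthant under delay differential equations. Your method-of-steps argument is a concrete, self-contained instantiation of that: the exponential representation of $I$ (valid along any solution, so in fact you do not even need the step structure for $I>0$), the first-touch contradiction for $S$ driven by the strictly positive delayed inflow $ke^{-\epsilon\tau_1}[S(t^\ast-\tau_1)+W(t^\ast-\tau_1)]$, the variation-of-constants formula for $W$, and the explicit integrals (\ref{PRsol}) for $P$ and $R$. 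The ordering is sound (positivity of $I$ needs nothing, positivity of $S$ needs only the histories of $S$ and $W$, non-negativity of $W$ needs $S\ge0$ and the history of $I$), so there is no circularity. What the citation buys the authors is uniform coverage of all parameter values; what your argument buys is transparency, at the price of one restriction you should acknowledge: you set $\tau:=\min\{\tau_1,\tau_2\}>0$, so the degenerate cases $\tau_1=0$ or $\tau_2=0$ --- which the paper explicitly considers in its stability analysis --- are not covered as written. In particular, when $\tau_1=0$ the inflow into $S$ at a first touch $S(t^\ast)=0$ is only $kW(t^\ast)\ge0$, so the strict contradiction evaporates and one needs either the general quasipositivity theorem or a small perturbation/Gronwall-type refinement of the first-touch argument. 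Adding a sentence handling those cases would make the proof complete.
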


\noindent This result can be proven using standard techniques, and it also follows from Theorem 5.2.1 in \cite{Smith95}. The next step is to establish that solutions of system (\ref{eq:system_one}) remain bounded during time evolution.\\

\begin{theorem}
Suppose there exists $T>0$, such that the solution $S(t)$ of the system (\ref{eq:system_one}) satisfies the condition $S(t)\leq M$ for $t\geq T$ with some $M>0$. Then the solutions $P(t),I(t),W(t),R(t)$ of the system (\ref{eq:system_one}) with initial conditions (\ref{ICs}) are bounded
for all $t\geq T$.
\end{theorem}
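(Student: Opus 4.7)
The plan is to bound the unknowns in a cascade: once a uniform bound for $I$ is in place, the remaining variables $W$, $R$ and $P$ all satisfy scalar inequalities of the form $\dot X \le (\textrm{bounded source}) - \epsilon X$, so a textbook linear comparison delivers boundedness. I would start by recording those easy reductions and then focus on the delay-differential bound for $I$.

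Concretely, given any upper bound $M_I := \sup_{t\ge T} I(t)$, the $W$-equation yields $\dot W \le \delta M M_I - \epsilon W$, so $W$ is bounded by $\max\{W(T),\,\delta M M_I/\epsilon\}$; the $R$-equation yields $\dot R \le \sigma M_I + \delta\phi M_I^{2} - \epsilon R$, bounding $R$; and in the $P$-equation I would drop the non-positive term $-P_s(t)$ to obtain $\dot P \le k(M + \sup W) - \epsilon P$, bounding $P$. Thus the whole theorem reduces to producing $M_I$.

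For $I$, the hypothesis $S\le M$ turns the third equation of (\ref{eq:system_one}) into the scalar delay inequality
\[ \dot I(t) \le a\,I(t) - b\, I(t)\,I(t-\tau_2), \qquad t\ge T, \]
with $a := \lambda M - (z+\sigma)$ and $b := \delta\phi > 0$. If $a\le 0$ the inequality gives $\dot I \le 0$ and $I$ is bounded by the (finite) supremum of its history on $[T-\tau_2,T]$. The interesting case is $a>0$; here I set $c := a/b$ and
\[ K := \max\Bigl\{ \sup_{s\in[T,\,T+\tau_2]} I(s),\; c\, e^{a\tau_2} \Bigr\}, \]
and claim $I(t)\le K$ for all $t\ge T$. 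I would argue by contradiction: if $I(t_0)>K$ for some $t_0>T+\tau_2$, continuity of $I$ forces the restriction of $I$ to $[T,t_0]$ to attain its supremum at some $t^*>T+\tau_2$, at which the left derivative, and hence $\dot I(t^*)$, is non-negative. Feeding this into the delay inequality, together with $I(t^*)>0$, pins $I(t^*-\tau_2)\le c$; dropping the non-positive quadratic term and integrating $\dot I\le aI$ over the single window $[t^*-\tau_2,t^*]$ then gives
\[ I(t^*) \le I(t^*-\tau_2)\,e^{a\tau_2} \le c\,e^{a\tau_2} \le K, \]
which contradicts $I(t^*)\ge I(t_0)>K$.

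The main obstacle is precisely this step for $I$. A direct comparison with the scalar ODE $\dot y = ay-by^{2}$ is not available because the self-limiting term has a delayed argument, and the Hutchinson-type inequality can in principle overshoot its equilibrium. The peak-point argument sidesteps this by exploiting the observation that wherever $I$ attains its running maximum the delayed value is pinned below $c$, after which a single Gronwall estimate over $[t^*-\tau_2,t^*]$ controls the peak height; once that is done, the cascade to $W$, $R$ and $P$ is routine.
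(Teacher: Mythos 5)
Your proof is correct, and its overall architecture --- first extract a uniform bound on $I$ from the delayed-logistic-type inequality $\dot I\le I[a-bI(t-\tau_2)]$, then feed the bounds on $S$ and $I$ into the linear structure of the $W$, $R$ and $P$ equations via a scalar comparison --- is exactly the paper's. Where you genuinely differ is in how the key bound on $I$ is obtained. The paper rescales $t$ and $I$ so that the inequality becomes the normalized Hutchinson form $\tilde I'\le\tilde I\,[1-\tilde I(\cdot-\tilde\tau_2)]$ and then cites Proposition 5.13 in \cite{Smith11} together with a comparison theorem, arriving at $I(t)\le(\lambda M/\delta\phi)\,e^{\lambda\tau_2 M}$; note that the paper discards the $-(z+\sigma)I$ term, so its growth rate is $\lambda M$ rather than your sharper $a=\lambda M-(z+\sigma)$. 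You instead give a self-contained peak-point argument: at a running maximum $t^*>T+\tau_2$ the derivative is non-negative, which (using $I>0$) pins $I(t^*-\tau_2)\le a/b$, and a single Gronwall step over the window $[t^*-\tau_2,t^*]$ caps the peak at $(a/b)e^{a\tau_2}$. This is essentially the standard proof of the cited proposition, so you have in effect reproved the lemma rather than invoked it; what this buys is an explicit bound $\max\{\sup_{[T,T+\tau_2]}I,\ (a/b)e^{a\tau_2}\}$ that correctly accounts for the initial segment (Smith's estimate is, strictly speaking, an eventual bound, a point the paper's clean inequality glosses over), plus a clean treatment of the degenerate case $a\le0$, where $I$ is simply non-increasing. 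One cosmetic repair: in the cascade you should take $M_I=\sup_{t\ge T-\tau_2}I(t)$ (finite by continuity on $[T-\tau_2,T]$ together with your bound on $[T,\infty)$), since the $W$ and $R$ equations involve $I(t-\tau_2)$ and for $t\in[T,T+\tau_2]$ the delayed argument dips below $T$; with that adjustment the linear comparisons for $W$, $R$ and $P$ go through exactly as you state, matching the explicit exponential estimates in the paper.
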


\begin{proof} Starting with an equation for $I(t)$ in (\ref{eq:system_one}) and using the bound on $S(t)$, we have for $t\geq T$
\[
\frac{dI}{dt}\leq I(t)[\lambda M-\delta\phi I(t-\tau_2)].
\]
Introducing rescaled variables
\[
t=\frac{1}{\lambda M}\tilde{t},\quad I(t)=\frac{\lambda M}{\delta\phi}\tilde{I}(\tilde{t}),\quad \tau_2=\frac{1}{\lambda M}\tilde{\tau}_2,
\]
the above inequality can be rewritten as
\[
\frac{d\tilde{I}}{d\tilde{t}}\leq \tilde{I}(\tilde{t})[1-\tilde{I}(\tilde{t}-\tilde{\tau}_2)].
\]
Proposition 5.13 in \cite{Smith11} together with a comparison theorem implies that the solution of this inequality satisfies
\[
\tilde{I}(\tilde{t})\leq e^{\tilde{\tau}_2},
\]
or, in terms of the original variables,
\[
I(t)\leq \frac{\lambda}{\delta\phi}Me^{\lambda\tau_2 M},
\]
which shows that $I(t)$ is bounded for $t\geq T$. Applying the bounds on $S(t)$ and $I(t)$ in equations for $P$, $W$ and $R$, and using the comparison theorem gives the following results for $t\geq T$
\[
\begin{array}{l}
\displaystyle{P(t)\leq\frac{kM}{\epsilon}\left(1+\frac{\lambda M}{\phi\epsilon} e^{\lambda\tau_2 M}\right)+\left[P(0)-\frac{kM}{\epsilon}\left(1+\frac{\lambda M}{\phi\epsilon} e^{\lambda\tau_2 M}\right)\right]e^{-\epsilon t},}\\\\
\displaystyle{W(t)\leq \frac{\lambda M^2}{\phi\epsilon}e^{\lambda\tau_2 M}+\left[W(0)-\frac{\lambda M^2}{\phi\epsilon}e^{\lambda\tau_2 M}\right]e^{-\epsilon t},}\\\\
\displaystyle{R(t)\leq \frac{\lambda M}{\delta\phi\epsilon}e^{\lambda\tau_2 M}\left(\sigma+\lambda Me^{\lambda\tau_2 M}\right)+\left[R(0)-\frac{\lambda M}{\delta\phi\epsilon}e^{\lambda\tau_2 M}\left(\sigma+\lambda Me^{\lambda\tau_2 M}\right)\right]e^{-\epsilon t}.}
\end{array}
\]
These inequalities prove that the solutions $P(t)$, $W(t)$ and $R(t)$ also remain bounded for $t\geq T$.

\end{proof}

\noindent {\bf Remark.} In all numerical simulations, some of which will be presented in Section 6, the solutions of the system (\ref{eq:system_one}) always satisfied the condition that $S(t)$ remains bounded, which, in the light of {\bf Theorem 2.2}, implies boundedness of all other variables.

Since the variables $P(t)$ and $R(t)$ are fully determined by solutions of the system (\ref{eq:system_two}) through expressions given by (\ref{PRsol}), from now on we will focus on the dynamics of reduced system (\ref{eq:system_two}).

\section{Steady states and  feasibility conditions}

The  system (\ref{eq:system_two}) has up to three possible  steady states. For any parameter values it admits a trivial steady state $E_0=(0,0,0)$ that corresponds to all cell populations going extinct. Linearisation of the system (\ref{eq:system_two}) near the steady state $E_0 = (0,0,0)$ yields a characteristic equation
\begin{eqnarray}\label{CHP_TS}
-(\mu+\epsilon)(\mu+\sigma+z)(k e^{-\epsilon\tau_1}e^{-\mu\tau_1}-\mu)=0.
\end{eqnarray}
Since all parameters are positive, this equation admits two negative roots $\mu_1=-\epsilon$ and $\mu_2=-(\sigma+z)$, and all remaining roots are determined as the solutions of the transcendental equation
\begin{equation}\label{CHP_TS_mu}
\mu = k e^{-\epsilon\tau_1}e^{-\mu\tau_1}.
\end{equation} 
This equation has a real root $\mu>0$ for any values of $k>0$ and $\tau_1\geq 0$, implying that the trivial steady state $E_0$ is unstable for any values of system parameters, and hence, it is impossible for all cell populations to become extinct.

The second steady state of the system (\ref{eq:system_two}) that also exists for any parameter values is a disease-free steady state given by
\begin{equation}
E_1=\left(\epsilon^{-1}K(\tau_1),0,0\right),
\end{equation}
where
\[
K(\tau_1)=k e^{-\epsilon\tau_1}.
\]
It is easy to see that $K(\tau_1)\le k$ for all $\tau_1 \ge 0$.

The third, endemic steady state $E_2 = (S^*,I^*,W^*)$ is characterised by all cell populations being non-zero, and it can be found as
\begin{equation}
\begin{array}{l}
\displaystyle{S^*= S(\tau_1)= \frac{K(\tau_1)}{\epsilon}-\frac{[\delta K(\tau_1)-\epsilon(\lambda+\delta)][\lambda K(\tau_1) -\epsilon (z+\sigma)]}{\epsilon[\epsilon\lambda^2-\delta\lambda(K(\tau_1)-\epsilon)+\delta\phi\epsilon^2]},}\\\\
\displaystyle{I^*= I(\tau_1)=\frac{\epsilon[\lambda K(\tau_1)-\epsilon (z+\sigma)]}{\epsilon\lambda^2-\delta\lambda[K(\tau_1)-\epsilon]+\delta\phi\epsilon^2},}\\\\
\displaystyle{W^* = W(\tau_1)=\frac{\delta\left[\epsilon\delta\phi K(\tau_1)- (z+\sigma)(\delta K(\tau_1)-\epsilon(\lambda+\delta))\right][\lambda K(\tau_1)-\epsilon(z+\sigma)]}{[\epsilon\lambda^2-\delta\lambda( K(\tau_1)-\epsilon)+\delta\phi\epsilon^2]^2}}.
\end{array}
\end{equation}
For the endemic steady state $E_2$ to be biologically feasible, all components $S^*$, $I^*$ and $W^*$ must be positive. It is easy to show that $I^*>0$ implies $S^*,W^*>0$, hence for this steady state to be plausible, it is sufficient to require $I^*>0$.

Let $C = \left\{\frac{\epsilon(z+\sigma)}{\lambda},\frac{\epsilon(\lambda^2+\delta\lambda+\delta\phi\epsilon)}{\delta \lambda}\right\} $ and choose $
C_{\mathrm{min}} = \mathrm{min}(C)$ and $C_{\mathrm{max}} = \mathrm{max}(C)$. Hence, the feasibility condition of the endemic steady state is given by $C_{\mathrm{min}}<K(\tau_1)<C_{\mathrm{max}}$ or equivalently
\begin{equation}\label{conditionEQ}
\frac{\ln(k)-\ln(C_{\mathrm{max}})}{\epsilon}<\tau_1<\frac{\ln(k)-\ln(C_{\mathrm{min}})}{\epsilon}.
\end{equation}
Recalling that $K(\tau_1) = ke^{-\epsilon \tau_1}$, we have $K(\tau_1)\le k$ for all $\tau_1 \ge 0$.
Hence, we have proved the following theorem:\\

\begin{theorem}\label{Theorem:Endemic}
Let the endemic steady state be given by $E_2 = (S^*,I^*,W^*)$. Then the following statements hold:\\
\noindent (i) For $k\le C_{\mathrm{min}}$,  we have that $E_2$ is not feasible.\\\\
\noindent (ii) For  $C_{\mathrm{min}}<k\leq C_{\mathrm{max}}$, the endemic steady state exists if and only if $\displaystyle{\tau_1<[\ln(k)-\ln(C_{\mathrm{min}})]/\epsilon}$.\\\\
\noindent (iii) For $k>C_{\mathrm{max}}$, $E_2$ is feasible if and only if the condition (\ref{conditionEQ}) is satisfied.
\end{theorem}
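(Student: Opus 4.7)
The plan is to leverage the monotonicity of $K(\tau_1)=ke^{-\epsilon\tau_1}$ together with the feasibility condition $C_{\min}<K(\tau_1)<C_{\max}$ already derived in the text preceding the theorem. Since $\epsilon>0$, the function $K(\tau_1)$ is strictly decreasing on $[0,\infty)$ with $K(0)=k$ and $K(\tau_1)\to 0$ as $\tau_1\to\infty$, so it maps $[0,\infty)$ bijectively onto $(0,k]$. All three statements will follow by comparing the interval $(C_{\min},C_{\max})$ with the range of $K$ and then inverting the inequalities using the logarithm. I also recall from the discussion above the theorem that $I^{*}>0$ already forces $S^{*},W^{*}>0$, so feasibility of $E_2$ is equivalent to $I^{*}>0$, which in turn is equivalent to $C_{\min}<K(\tau_1)<C_{\max}$.

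For part (i), assume $k\le C_{\min}$. Then for every $\tau_1\ge 0$ we have $K(\tau_1)\le K(0)=k\le C_{\min}$, so the lower bound $C_{\min}<K(\tau_1)$ cannot hold; thus no admissible $\tau_1$ exists and $E_2$ is infeasible. For part (ii), assume $C_{\min}<k\le C_{\max}$. Since $K(\tau_1)\le k\le C_{\max}$ for all $\tau_1\ge 0$, the upper bound is automatic, so the feasibility condition reduces to $K(\tau_1)>C_{\min}$, i.e.\ $ke^{-\epsilon\tau_1}>C_{\min}$. Taking logarithms and using $\epsilon>0$, this is equivalent to $\tau_1<[\ln(k)-\ln(C_{\min})]/\epsilon$. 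Moreover, $k>C_{\min}$ guarantees that the right-hand side is strictly positive, so such $\tau_1$ actually exist. For part (iii), assume $k>C_{\max}$. Now both inequalities are nontrivial: inverting $K(\tau_1)<C_{\max}$ yields $\tau_1>[\ln(k)-\ln(C_{\max})]/\epsilon$, and inverting $K(\tau_1)>C_{\min}$ yields $\tau_1<[\ln(k)-\ln(C_{\min})]/\epsilon$. Combining these gives precisely (\ref{conditionEQ}); the interval is non-empty because $C_{\min}\le C_{\max}$ forces the left endpoint to lie below the right endpoint.

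The work is essentially bookkeeping, so I do not anticipate a genuine obstacle. The only point that requires some care is the boundary case $k=C_{\max}$ in part (ii), where $K(0)=C_{\max}$ violates the strict upper inequality, but any $\tau_1>0$ in the stated range still yields a feasible equilibrium, so the equivalence in the statement is preserved. Everything else is a direct application of the monotonicity of $K(\tau_1)$ and the logarithm.
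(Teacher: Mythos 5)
Your proof is correct and follows essentially the same route as the paper: the authors likewise reduce feasibility of $E_2$ to $I^*>0$, i.e.\ to $C_{\mathrm{min}}<K(\tau_1)<C_{\mathrm{max}}$, and then invert the strictly decreasing map $K(\tau_1)=ke^{-\epsilon\tau_1}$ with $K(0)=k$ exactly as you do. Your observation about the boundary case $k=C_{\mathrm{max}}$ with $\tau_1=0$ (where the stated inequality on $\tau_1$ holds but $K(0)=C_{\mathrm{max}}$ makes $E_2$ infeasible) identifies a genuine, if measure-zero, imprecision in the theorem's ``if and only if'' that the paper itself glosses over.
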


The conditions of this theorem imply that whilst the trivial and the disease-free steady states exist for any parameter values, the endemic steady state can only exist, provided the growth rate of new plant cells is sufficiently large. This is needed to ensure that a sufficient number of new infections occur before the infection is cleared by the immune response.

\section{Stability analysis of the disease-free steady state}

The characteristic equation of linearisation near the disease-free steady state $E_1$ is given by
\begin{equation}\label{chary2}
\begin{array}{l}
\displaystyle{(\mu+\epsilon)\left[\mu+\sigma+z-\frac{\lambda K(\tau_1)}{\epsilon}\right]\left[-2K(\tau_1)-\mu+K(\tau_1)e^{-\mu\tau_1}\right]=0.}
\end{array}
\end{equation}
One eigenvalue $\mu_1=-\epsilon$ is always negative. The second eigenvalue
\[
\mu_2= \frac{\lambda K(\tau_1)}{\epsilon}-(z+\sigma),
\]
is negative for $\tau_1=0$ if
\[
k<k_{\rm min},\hspace{0.5cm}k_{\mathrm{min}} =\frac{\epsilon(\sigma+z)}{\lambda},
\]
and for $\tau_1>0$, if
\[
\displaystyle{k>k_{\rm min},\hspace{0.5cm} \tau_1>\frac{\ln(k)-\ln(k_{\mathrm{min}})}{\epsilon}}.
\]

The last eigenvalue of the characteristic equation (\ref{chary2}) satisfies the transcendental equation
\begin{equation}\label{mu3}
\mu_3= K(\tau_1)(e^{-\mu_3\tau_1}-2).
\end{equation}

For $\tau_1 = 0$, we have $\mu_3=-k<0$. For $\tau_1 > 0$, it immediately follows that $\mu_3=0$ is not a root of (\ref{mu3}), so we look at the roots of this equation in the form $\mu_3= iw$, $w>0$. Substituting this into (\ref{mu3}) yields
\begin{equation}
K(\tau_1)[\cos(w\tau_1)-2]-iK(\tau_1)\sin(w\tau_1) = 0.
\end{equation}
Since $[\cos(w\tau_1)-2]<0$ for any $\tau_1>0$, this implies that the equation (\ref{chary2}) does not admit purely imaginary roots. Hence,  we have proved the following result:\\

\begin{theorem}\label{Theorem:Disease_free} Let the disease-free steady state be given by $\displaystyle{E_1= \left(\frac{K(\tau_1)}{\epsilon},0,0\right)}$  and denote  $\displaystyle{k_{\mathrm{min}} =\frac{\epsilon(\sigma+z)}{\lambda}}$. Then, the following statements hold:\\
\noindent (a) Given $k<k_{\mathrm{min}}$,  $E_1$ is linearly asymptotically stable for all $\tau_1 \ge 0$.\\\\
\noindent (b) Given $k \ge k_{\mathrm{min}}$ and $\displaystyle{\tau_{min} = \frac{\ln(k)-\ln(k_{\mathrm{min}})}{\epsilon}}$,  $E_1$ is linearly asymptotically stable for $\tau_1>\tau_{min}$, unstable for $\tau<\tau_{min}$, and undergoes a steady-state bifurcation at $\tau_1=\tau_{min}$.
\end{theorem}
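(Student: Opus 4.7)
The plan is to exploit the factorisation of the characteristic equation (\ref{chary2}) into three independent factors that have already been isolated before the theorem statement. The factor $(\mu+\epsilon)$ contributes only the harmless eigenvalue $\mu_1=-\epsilon$. The second factor is linear in $\mu$ and yields the explicit eigenvalue $\mu_2=\lambda K(\tau_1)/\epsilon-(z+\sigma)$. Since $K(\tau_1)=ke^{-\epsilon\tau_1}$ is strictly decreasing in $\tau_1$ with $K(0)=k$ and $K(\tau_1)\to 0$, the sign change of $\mu_2$ is governed entirely by the condition $K(\tau_1)=k_{\min}$, i.e.\ by $\tau_1=\tau_{\min}$. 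This immediately delivers the dichotomy in the two parts of the statement: for $k<k_{\min}$ one has $\mu_2<0$ uniformly in $\tau_1\geq 0$, whereas for $k\geq k_{\min}$ the inequality $\mu_2<0$ holds precisely when $\tau_1>\tau_{\min}$, fails when $\tau_1<\tau_{\min}$, and becomes an equality at $\tau_1=\tau_{\min}$. A single simple real eigenvalue passing through zero identifies this as a steady-state bifurcation.

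The substantive step is to verify that the transcendental factor (\ref{mu3}) contributes no destabilising root for any $\tau_1\geq 0$. Rather than building a continuity-of-roots / Hopf-exclusion argument, I would rewrite the equation as
\[
\mu_3 + 2K(\tau_1) = K(\tau_1)\, e^{-\mu_3\tau_1},
\]
and compare the two sides in modulus under the assumption $\Re(\mu_3)\geq 0$. The left-hand side then has real part at least $2K(\tau_1)>0$, so its modulus is at least $2K(\tau_1)$, while the right-hand side has modulus $K(\tau_1)e^{-\Re(\mu_3)\tau_1}\leq K(\tau_1)$. The resulting inequality $2K(\tau_1)\leq K(\tau_1)$ is impossible since $K(\tau_1)>0$, so every root of (\ref{mu3}) has strictly negative real part for every $\tau_1\geq 0$.

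Putting the three pieces together yields parts (a) and (b) directly. The principal obstacle is the third factor: the excerpt's observation that (\ref{mu3}) admits no purely imaginary root rules out crossings of the imaginary axis, but on its own it does not exclude unstable roots that might already be present or that could enter from infinity as $\tau_1$ moves away from $0$, and closing that gap via continuous dependence of the (infinite) root set on the delay parameter requires some care. The modulus estimate above finesses the issue by establishing the stronger assertion that no root of (\ref{mu3}) can have non-negative real part at any $\tau_1\geq 0$, making a separate continuity argument unnecessary.
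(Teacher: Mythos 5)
Your proposal is correct, and for the first two factors of the characteristic equation (\ref{chary2}) it coincides with the paper's proof: the eigenvalue $\mu_1=-\epsilon$ is discarded, and the sign of $\mu_2=\lambda K(\tau_1)/\epsilon-(z+\sigma)$ is read off from the monotone decay of $K(\tau_1)=ke^{-\epsilon\tau_1}$, which yields the dichotomy between cases (a) and (b) and identifies $\tau_1=\tau_{\min}$ as the point where a simple real eigenvalue crosses zero. Where you genuinely depart from the paper is in the treatment of the transcendental factor (\ref{mu3}). The paper argues in the classical D-subdivision style: the root is $-k<0$ at $\tau_1=0$, zero is never a root, and a substitution $\mu_3=iw$ shows the factor admits no purely imaginary roots, whence stability persists for all $\tau_1>0$. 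As you note, that argument implicitly invokes continuous dependence of the root set on the delay and the fact that, for retarded equations, roots cannot enter the right half-plane except by crossing the imaginary axis; this is standard but is left unstated in the paper. Your modulus estimate --- writing the factor as $\mu_3+2K(\tau_1)=K(\tau_1)e^{-\mu_3\tau_1}$ and observing that $\operatorname{Re}(\mu_3)\ge 0$ forces the contradiction $2K(\tau_1)\le\left|\mu_3+2K(\tau_1)\right|=K(\tau_1)e^{-\operatorname{Re}(\mu_3)\tau_1}\le K(\tau_1)$ --- is a correct, self-contained a priori bound that excludes any root in the closed right half-plane for every $\tau_1\ge 0$ simultaneously, so no continuity or crossing argument is needed. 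What the paper's route buys is uniformity of method with the Hopf analysis carried out elsewhere in the paper; what yours buys is a shorter and logically tighter exclusion of the third factor. Both establish the theorem.
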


This theorem indicates that the disease-free steady is stable, as long as new infections appear slower than they are cleared by recovery or death of the infected cells. Additionally, the theorem suggests that stability of the disease-free steady state depends only on the maturation time of undifferentiated proliferating cells, natural and infection-induced mortality rates, and the rates  at which infected cells spread the infection and recover. This immediately implies that the propagation of the warning signal and the acquired immunity of uninfected cells  is not enough for a complete recovery of the host. Moreover, this suggests that the propagating component of PTGS  acts only as an amplifier of immune response rather than playing an essential role in recovery. Hence, in plants with a strong localized immune response, suppression of the warning signal  would most likely only delay recovery rather than completely inhibit it. Equivalently, a localized  immune response that is too weak will most likely never lead to a complete recovery despite a potentially strong propagating warning signal.

\section{Stability analysis of the endemic steady state}
Linearisation near the endemic steady state $E_2=\left(S^*,I^*,W^*\right)$ yields the following characteristic equation
\begin{equation} \label{mother}
\begin{array}{l}
\mu^3+p_2(\mu,\tau_1,\tau_2)\mu^2+p_1(\mu,\tau_1,\tau_2)\mu+p_0(\mu,\tau_1,\tau_2)=0,
\end{array}
\end{equation}
where
\[
\begin{array}{l}
p_2=\;-K(\tau_1)e^{-\mu\tau_1}+\delta \phi I^*e^{-\mu\tau_2}+\ I^*(\lambda+\delta)+\epsilon (1+2S^*),\\
p_1= e^{-\mu\tau_1}p_{11}+e^{-\mu\tau_2}p_{12}+e^{-\mu(\tau_1+\tau_2)}p_{13}+p_{14},\\
p_0= e^{-\mu\tau_2}p_{01}+e^{-\mu(\tau_1+\tau_2)}p_{02}+p_{03},
\end{array}
\]
and
\[
\begin{array}{l}
p_{11} =\; -K \left( {\it \tau_1} \right)  \left( \epsilon+\delta\,{\it I^*}\right),\\
p_{12} =\; \delta\phi\,\left( \lambda+{\delta} \right) {{\it I^*}}^{2
}+ \left[  \left( \lambda+2\,\phi\,\epsilon \right) \delta\,S^*+\delta\,\phi\,\epsilon \right] {\it I^*},\\
p_{13} =\;-K(\tau_1)\delta \phi I^*,\hspace{0.5cm}
p_{14} =\;\left( \lambda\,\epsilon+\delta\,\epsilon+{\lambda}^{2}S^* \right) {
\it I^*}+2\,{\epsilon}^{2}S^*,\\
p_{01} =\;\delta \phi\left( {\delta}\,\epsilon+\lambda\ \epsilon
\right) {{\it I^*}}^{2}+ \epsilon \left( 2\,{\epsilon}\phi+\lambda\,
\right) \delta\,S^*{\it I^*},\\
p_{02} =\; -K \left( {\it \tau_1} \right) \delta\,{\it I^*}\, \left( {\it I^*}\,
\delta\,\phi+\phi\,\epsilon+\lambda\,S^* \right),\hspace{0.5cm}
p_{03} =\; {\lambda}^{2}\epsilon\,S^*{\it I^*}.
\end{array}
\]

\subsection{Instantaneous maturity}
As a first step in the analysis, we consider the case when the proliferating cells immediately achieve maturity, i.e. $\tau_1 = 0$. In this case, the equation (\ref{mother}) reduces to 
\begin{equation}\label{eq:Tran}
\mu^3+(a_1e^{-\mu \tau_2} +a_2)\mu^2 +(b_1e^{-\mu \tau_2} +b_2) \mu +(c_1e^{-\mu \tau_2} +c_2)=0,
\end{equation}
where
\[
\begin{array}{l}
a_1 =\;\delta\,\phi{\it I^*},\hspace{0.5cm}
a_2 =\; \left( \lambda+\delta \right) {\it I^*}+ \left( 2\,S^*+1 \right) \epsilon-k,\\
b_1 =\; \delta \phi\left( {\delta}+\lambda \right) {{\it I^*}}^{2}+ \left[  \left( \epsilon-k+2\,\epsilon\,S^* \right) \phi+\lambda\,S^*\right] \delta\,{\it I^*},\\
b_2 =\; \left[  \left( \lambda+\delta \right) \epsilon-\delta\,k+{\lambda}^{2}S^*\right] {\it I^*}-k\epsilon+2\,{\epsilon}^{2}S^*,\\
c_1 =\; -\delta I^* \left[ \left[k\delta -\epsilon\left(\delta+\lambda \right) \right] \phi
\,{\it I^*}+ \left( k\lambda -\,{\epsilon}(2\epsilon\phi+\lambda)\right)S^*+k\phi\,\epsilon\right],\\
c_2 =\; {\lambda}^{2}\epsilon\,S^*{\it I^*}.
\end{array}
\]
When $\tau_2= 0$, the characteristic equation (\ref{eq:Tran}) reduces to
\begin{equation*}
\mu^3+(a_1+a_2)\mu^2+(b_1+b_2)\mu+ (c_1+c_2)=0.
\end{equation*}
By the Routh-Hurwitz criterion, the roots of this equation have negative real part if and only if
\begin{equation}
a_1+a_2>0,\; c_1+c_2>0,\; (a_1+a_2)(b_1+b_2)>c_1+c_2.
\end{equation} 

For $\tau_2 > 0$, we follow the methodology of Ruan and Wei \cite{Ruan2001} and look for the roots of the characteristic equation (\ref{eq:Tran}) in the form $\mu=i\omega$, $\omega>0$, which gives
\begin{equation*}
(ib_1\omega+c_1 - a_1\omega^2)(\cos \omega\tau_2-i\sin \omega\tau_2)-i\omega^3-a_2\omega^2+c_2+ib_2\omega = 0.
\end{equation*}
Separating this equation into real and imaginary parts yields
\begin{equation}\label{eq:Two_p}
\begin{array}{l}
b_1\omega \sin \omega\tau_2 - (a_1\omega^2 - c_1)\cos \omega\tau_2 = a_2\omega^2-c_2,\\
b_1\omega \cos \omega\tau_2 + (a_1\omega^2 - c_1)\sin \omega\tau_2 = \omega^3-b_2\omega.
\end{array}
\end{equation}
Squaring and adding these two equations gives the following equation for the Hopf frequency $\omega$:
\begin{equation}\label{eq:after_a}
\omega^6 +({a_2}^2-{a_1}^2 - 2b_2)\omega^4 +(2c_1a_1 -2c_2a_2+{b_2}^2-{b_1}^2)\omega^2+{c_2}^2-{c_1}^2=0.
\end{equation}
Introducing an auxiliary variable $v = \omega^2$, the last equation can be rewritten as
\begin{equation}\label{eq:cubic}
h(v)=v^3+pv^2+qv+r = 0,
\end{equation}
where
\[
\begin{array}{l}
p = ({a_2}^2-{a_1}^2 - 2b_2),\\
q = (2c_1a_1 -2c_2a_2+{b_2}^2-{b_1}^2),\\
r = {c_2}^2-{c_1}^2.
\end{array}
\]

It is straightforward to see that $h(0) = r$ and $\lim_{v\to \infty} h(v) = \infty$, hence given $r<0$, by the intermediate value theorem $h(v)$ has a zero $v_0 \in (0,\infty)$. To investigate what happens when $r$ is positive, we look at the critical points of the function $h(v)$ as given by:
\begin{equation}\label{roots}
v_{1,2} = \frac{-p\pm\sqrt{p^2-3q}}{3}
\end{equation}
One can  see that for $\Delta = p^2-3q<0$,  the quadratic $h'(v)$ has no real roots and so the function $h(v)$ must be monotonic. For  $\lim_{v\to \infty} h(v) = \infty$,  the function $h(v)$ must also be an increasing function, and since $h(0)= r\ge 0 $, we must have that equation (\ref{eq:cubic}) has no positive real roots.

Suppose that $\Delta \ge 0$. Then, for $v_1 = \frac{-p+\sqrt{\Delta}}{3} $, we have $h''(v_{1,2}) = \pm \sqrt{\Delta}$, and, therefore, $v_1$ is a local minimum  whereas $v_2$ is a local maximum  of $h(v)$. Note that $v_2<v_1$, and hence $v_1<0$ implies $v_2 <0$. If $v_1<0$ is the local minimum and $h(0) = r>0$, $h(v)$ is an increasing function on the domain $[v_1,\infty)$, and hence there are no positive real roots of $h(v)=0$. Equivalently, if $v_1>0$, the function $h(v)$ is increasing in the interval $[v_1,\infty)$, hence a positive root can only exist if $h(v_1) \le 0$. We have, therefore, proved the following lemma.
\begin{lemma}\label{Lemma_A}
Let $v_{1,2}$ be given by (\ref{roots}).\\
\noindent (i) If $r<0$, the equation (\ref{eq:cubic}) has at least one positive root.\\\\
\noindent (ii) If $r \ge 0 $ and $\Delta <0$, or $\Delta>0$ and $v_1<0$, or $\Delta>0$, $v_1>0$ and $h(v_1)> 0 $, the equation (\ref{eq:cubic}) has no positive roots.
\noindent (iii) If $r\ge 0 $, $v_1>0$ and $h(v_1)\le 0$, the equation (\ref{eq:cubic}) has at least one positive root.
\end{lemma}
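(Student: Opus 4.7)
My plan is to treat each of the three cases via elementary calculus on the cubic $h(v)=v^3+pv^2+qv+r$, using the fact that $h(0)=r$ and $\lim_{v\to\infty}h(v)=\infty$, together with the observation that $h'(v)=3v^2+2pv+q$ has discriminant $4(p^2-3q)=4\Delta$, so the nature and location of the critical points $v_{1,2}$ given by (\ref{roots}) fully determines the monotonicity profile of $h$.

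For part (i), since $h(0)=r<0$ and $h(v)\to\infty$, the intermediate value theorem yields at least one positive root. For part (iii), the hypothesis $v_1>0$ places the local minimum strictly in the positive axis with $h(v_1)\le 0$, so applying the intermediate value theorem on $[v_1,\infty)$ again produces a positive root. Both of these are essentially one-line arguments.

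The real work is in part (ii), which is the negative statement and therefore must rule out positive roots in three separate subcases. When $\Delta<0$, the quadratic $h'(v)$ has no real roots so $h'>0$ everywhere, making $h$ strictly increasing; combined with $h(0)=r\ge 0$, no positive root can exist. When $\Delta>0$ but $v_1<0$, the relation $v_2<v_1<0$ places both critical points to the left of the origin, so $h$ is strictly increasing on $[0,\infty)$ and again $h(0)\ge 0$ precludes positive roots. The more delicate subcase is $\Delta>0$, $v_1>0$, $h(v_1)>0$: here I would split according to the sign of $v_2$. If $v_2\le 0$, then on $[0,\infty)$ the function decreases from $h(0)=r\ge 0$ to the local minimum $h(v_1)>0$ and then increases, so $h>0$ throughout. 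If $v_2>0$, the function increases on $[0,v_2]$, decreases on $[v_2,v_1]$, and increases on $[v_1,\infty)$, so the minimum on $[0,\infty)$ equals $\min\{h(0),h(v_1)\}>0$, again ruling out positive roots.

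The only mildly tricky point is being careful about which of $v_1,v_2$ is the local maximum and which is the local minimum; the paper has already established via $h''(v_{1,2})=\pm\sqrt{\Delta}$ that $v_1$ is the minimum and $v_2$ the maximum, and that $v_2<v_1$, so I can take this for granted. Apart from that, each subcase is a direct monotonicity argument, and no Hopf-bifurcation or root-continuation machinery is needed at this stage.
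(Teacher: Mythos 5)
Your proposal is correct and follows essentially the same route as the paper: the intermediate value theorem for parts (i) and (iii), and a monotonicity analysis of $h$ via the location of the critical points $v_{1,2}$ for part (ii). Your extra split on the sign of $v_2$ in the subcase $\Delta>0$, $v_1>0$, $h(v_1)>0$ merely fills in a detail the paper leaves implicit (that $h$ cannot dip below zero on $[0,v_1]$ when $h(0)=r\ge 0$ and $h(v_1)>0$), so no substantive difference remains.
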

Without loss of generality, let us assume that equation (\ref{eq:cubic}) has  three distinct positive roots denoted by $v_1,v_2$ and $v_3$. This implies that the equation (\ref{eq:after_a}) also has at least three positive roots
\[
w_1 = \sqrt{v_1},\;w_2 = \sqrt{v_2},\;w_3 = \sqrt{v_3}.
\]
Solving the system (\ref{eq:Two_p}) for $\tau_2$ yields
\begin{equation}
\begin{array}{l}
\displaystyle{{\tau_2}^{(j)}(n) = \frac{1}{w_n}\left[\arctan\left({\frac {a_{{1}}{w_n}^{5}+ \left( b_{{1}}a_{{2}}-c_{{1}}-a_{{1}}b_{{2}}
 \right) {w_n}^{3}+ \left( c_{{1}}b_{{2}}-b_{{1}}c_{{2}} \right) w_n}{
 \left( b_{{1}}-a_{{1}}a_{{2}} \right) {w_n}^{4}+ \left( 
c_{{1}}a_{{2}}+a_{{1}}c_{{2}} -b_{{1}}b_{{2}}\right) {w_n}^{2}-c_{{1}}c_{{2}}}}\right)
+ (j-1)\pi\right],}\\\\
\\
n = 1,2,3;\;\; j\in \mathbb{N}.
\end{array}
\end{equation}
This allows us to define the following:
\begin{equation}\label{tauzero}
{\tau_2}^* = {\tau_2}^{(j_0)}(n_0)=\min_{1\le n\le 3,\;j\ge 1}\{{\tau_2}^{(j)} (n)\}, w_0 = w_{n_0}.
\end{equation}

In order to establish whether the steady state $E_2$ actually undergoes a Hopf bifurcation at $\tau_2={\tau_2}^*$, we compute the sign of $d[\operatorname{Re}\mu({\tau_2}^*)]/d \tau_2$. Differentiating both sides of equation (\ref{eq:Tran}) with respect to $\tau_2$ yields
\begin{equation*}
\left(\frac{d\mu}{d\tau_2}\right)^{-1} = {\frac { \left( 3\,{\mu}^{2}+2\,a_{{2}}\mu+b_{{2}} \right) {{e}^{\mu\,\tau_{{2}}}}+2\,a_{{1}}\mu+b_{{1}}}{\mu\, \left( a_{{1}}{\mu}^{2}
+b_{{1}}\mu+c_{{1}} \right) }}-{\frac {\tau_{{2}}}{\mu}}.
\end{equation*}
Introducing the notation $V = {w_0}^{2} \left[ \left(c_{{1}} -{w_0}^{2}a_{{1}} \right) ^{2}+{w_0}^{2}{b_{{1}}}^{2} \right]$, it follows that $V>0$ for all $w_0>0$, and
\begin{equation*}
\left(\frac{d\operatorname{Re}\mu({\tau_2}^*)}{d \tau_2}\right)^{-1} =\frac{w_0}{V} \left[\underbrace{A\cos(w_0\tau_2)+wB\sin(w_0\tau_2)}_{:=\Gamma}-{b_{{1}}}^{2}w_0+2\,a_{{1}}w_0 \left( c_{{1}} -{w_0}^{2}a_{{1}}\right)\right], 
\end{equation*}
where
\begin{eqnarray}
\nonumber A &=& \left( 3\,{w_0}^{2}-b_{{2}} \right) b_{{1}}w_0-2\,w_0a_{{2}} \left( {w_0}^{2
}a_{{1}}-c_{{1}} \right),
 \\
\nonumber B &=& 2\,{w_0}^{2}a_{{2}}b_{{1}}+ \left( 3\,{w_0}^{2}-b_{{2}} \right)  \left( 
{w_0}^{2}a_{{1}}-c_{{1}} \right),
\end{eqnarray}
and
\[
\Gamma = 3\,{w_0}^{5}+ \left( 2\,{a_{{2}}}^{2}-4\,b_{{2}} \right) {w_0}^{3}+
\left( {b_{{2}}}^{2} -2\,a_{{2}}c_{{2}}\right) w_0
\]
Consequently, for $v_0 = {w_0}^2$ we have
\begin{equation}
\begin{array}{l}
\displaystyle{\left(\frac{d\operatorname{Re}\mu({\tau_2}^*)}{d \tau_2}\right)^{-1} =\frac{1}{V}\Big[3{w_0}^{6}+ 2\left({a_{{2}}}^{2}-{a_{{1}}}^{2}-2b_{{2}}
\right){w_0}^{4}+ \left( 2a_{{1}}c_{{1}}-2a_{{2}}c_{{2}}+{b_{{2}}}^{2}-{b_{{1}}}
^{2} \right) {w_0}^2\Big]}\\\\
\displaystyle{=\frac{1}{V}\left[3{w_0}^6+2p{w_0}^4+q{w_0}^2\right]=\frac{1}{V}\left[3v_0^3+2pv_0^2+qv_0\right]=\frac{v_0}{V}h'(v_0),}
\end{array}
\end{equation}
where $h(v)$ is defined in (\ref{eq:cubic}). Since $v_0 = {w_0}^2>0$, this implies
\begin{equation*}
\mathrm{sign}\left(\frac{d\operatorname{Re}\mu({\tau_2}^*)}{d \tau_2}\right) =  \mathrm{sign}\left(\frac{d\operatorname{Re}\mu({\tau_2}^*)}{d \tau_2}\right)^{-1} =\mathrm{sign} \left[v_0h'(v_0)\right]=\mathrm{sign} \left[h'(v_0)\right].
\end{equation*}
These calculations can now be summarised as the following theorem.\\

\begin{theorem}\label{EndemicSS} Let the coefficients of the characteristic equation (\ref{eq:Tran}) satisfy $a_1+a_2>0, c_1+c_2>0$ and  $(a_1+a_2)(b_1+b_2)>c_1+c_2$. Additionally, let  $w_0,{\tau_2}^*$ be defined as in (\ref{tauzero}) with $v_0={w_0}^2$, and let $h'(v_0)> 0$. Then, the following holds.\\
\noindent (i) If $r\ge 0$ and $p^2<3q$, or $p^2>3q$ and $v_1<0$, or $p^2>3q$ and $v_1<0$ and $h(v_1)>0$, the endemic steady state $E_2$ of the system (\ref{eq:system_two}) is linearly asymptotically stable for all $\tau_2\ge 0$.\\\\
\noindent (ii) If $r<0$, or if $r\ge 0$ and $h(v_1 )<0$, the endemic steady state $E_2$ of system (\ref{eq:system_two}) is linearly asymptotically stable when $\tau_2 \in [0,{\tau_2}^*)$, unstable for $\tau_2>\tau_2^*$ and undergoes Hopf bifurcation at $\tau_2=\tau_2^*$. 
\end{theorem}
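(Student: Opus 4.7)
The plan is to run the standard Hopf-bifurcation-in-delay-systems paradigm: the Routh--Hurwitz hypotheses guarantee that at $\tau_2 = 0$ every root of the characteristic equation (\ref{eq:Tran}) has negative real part, so by continuity of roots of a quasi-polynomial in its delay parameter, stability of $E_2$ can only be lost as $\tau_2$ increases if a root crosses the imaginary axis. Everything therefore reduces to controlling the set of $\tau_2 \geq 0$ for which (\ref{eq:Tran}) admits purely imaginary roots $\mu = i\omega$, and this is equivalent to controlling positive real roots $v = \omega^2$ of the cubic $h(v) = 0$ from (\ref{eq:cubic}).

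For part~(i), I would simply invoke Lemma~\ref{Lemma_A}(ii): each of the listed sub-hypotheses (with the mild caveat that the third clause should read ``$v_1>0$'' for consistency with the lemma) is exactly a scenario in which $h(v)=0$ has no positive real root. Consequently (\ref{eq:after_a}) has no positive $\omega^2$ solution, so (\ref{eq:Tran}) has no imaginary root for any $\tau_2 \geq 0$. Combined with the Routh--Hurwitz assertion at $\tau_2 = 0$, this shows that all roots stay in the open left half-plane for every $\tau_2 \geq 0$.

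For part~(ii), the hypotheses correspond to clauses (i) or (iii) of Lemma~\ref{Lemma_A}, so $h$ has at least one positive root $v_0 = w_0^2$. Then the families $\tau_2^{(j)}(n)$ obtained by inverting (\ref{eq:Two_p}) via $\arctan$ are well-defined, and the minimum $\tau_2^*$ in (\ref{tauzero}) is the smallest delay at which an imaginary root appears. To promote this crossing to a genuine Hopf bifurcation, I would use the transversality identity already set up in the text, namely
\[
\operatorname{sign}\!\left.\frac{d\operatorname{Re}\mu}{d\tau_2}\right|_{\tau_2 = \tau_2^*}
= \operatorname{sign}\!\bigl[h'(v_0)\bigr],
\]
which is strictly positive by hypothesis. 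Simplicity of the imaginary pair $\pm i w_0$ (from $h'(v_0) \neq 0$) together with this positive transversality gives both the Hopf bifurcation at $\tau_2 = \tau_2^*$ and a strict left-to-right crossing, hence loss of stability immediately past $\tau_2^*$.

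The main obstacle is not a single hard estimate but the bookkeeping between Lemma~\ref{Lemma_A} and the three-case phrasing of the theorem: one must verify that each sub-case of (i) lands exactly in a ``no positive root'' scenario of the lemma and each sub-case of (ii) in an ``at least one positive root'' scenario, including the apparent typo noted above. Two further points deserve care. First, in asserting that $\tau_2^*$ is actually realised by a root crossing, one should check that no root has already crossed at a smaller delay; this is automatic from the definition of $\tau_2^*$ as the minimum over all $n,j$ because every purely imaginary root of (\ref{eq:Tran}) must arise from some $w_n$ with $v_n = w_n^2$ a positive root of $h$. Second, the instability claim for $\tau_2 > \tau_2^*$ is local in $\tau_2$, ensured by continuity and strict positivity of $h'(v_0)$; stability beyond possible further crossings is not asserted. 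Once $E_2$ is stable or undergoes Hopf, the remaining variables $P(t)$ and $R(t)$ inherit the corresponding asymptotic behaviour directly through (\ref{PRsol}).
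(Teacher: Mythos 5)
Your proposal is correct and follows essentially the same route as the paper: the Routh--Hurwitz conditions at $\tau_2=0$, reduction of imaginary-axis crossings to positive roots of the cubic $h(v)$ via Lemma~\ref{Lemma_A}, the definition of $\tau_2^*$ as the minimal crossing delay, and the transversality identity $\mathrm{sign}\bigl[d\operatorname{Re}\mu/d\tau_2\bigr]=\mathrm{sign}\bigl[h'(v_0)\bigr]$. Your observation that the third clause of part~(i) should read $v_1>0$ to be consistent with Lemma~\ref{Lemma_A}(ii) correctly identifies a typo in the theorem statement.
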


\subsection{Fast-spreading PTGS signal}

In the case when the PTGS signal is spreading very quickly, the time delay $\tau_2$ associated with the spread of this signal is negligibly small compared to other timescales in the system. In this case, setting $\tau_2 = 0$ in the characteristic equation (\ref{mother}) reduces it to
\begin{equation}\label{cubic_t2_zero}
\mu^3+\left[a_1(\tau_1)e^{-\mu\tau_1} +a_2(\tau_1)\right]\mu^2 +\left[b_1(\tau_1)e^{-\mu\tau_1} +b_2(\tau_1)\right] \mu +c_1(\tau_1)e^{-\mu\tau_1} +c_2(\tau_1) = 0,
\end{equation}
where
\[
\begin{array}{l}
a_1(\tau_1) =  - K(\tau_1),\hspace{0.5cm}
a_2(\tau_1) =  \left( \delta+\lambda+\delta\,\phi \right) {\it I^*}+2\,\epsilon\,S^*+\epsilon,\\
b_1(\tau_1) =  - \left[ \epsilon+\delta\,(\phi+1)\,{\it I^*} \right] 
K\left( {\it \tau_1} \right),\\
b_2(\tau_1) =  \delta\,\phi\left( \lambda\,+{\delta}\right) {{\it I^*}}^{2
}+ \left[S^*\left[  \left( \lambda+2\,\phi\,\epsilon \right) \delta+{
\lambda}^{2} \right] + \left( \epsilon+\phi\,\epsilon \right) \delta+
\lambda\,\epsilon \right] {\it I^*}+2\,{\epsilon}^{2}S^*,\\
c_1(\tau_1) =  -K\left( {\it \tau_1} \right) \, \left( {\it I^*}\,
\delta\,\phi+\phi\,\epsilon+\lambda\,S^* \right)\delta\,{\it I^*},\\
c_2(\tau_1) =  \delta \phi \epsilon\left( {\delta}+\lambda\,
 \right) {{\it I^*}}^{2}+ \left[ \left( 2\,{\epsilon}^{2}\phi+
\lambda\,\epsilon \right) \delta+{\lambda}^{2}\epsilon \right] S^*{\it 
I^*}.
\end{array}
\]

Looking for solutions of equation (\ref{cubic_t2_zero}) in the form $\mu = iw$ $(w>0)$, and separating the real and imaginary parts gives
\begin{equation}\label{sys:t2_zero}
\begin{array}{l}
b_1(\tau_1)w\sin(w\tau_1)-[a_1(\tau_1)w^2-c_1]\cos(w\tau_1) = [a_2(\tau_1)w^2-c_2(\tau_1)],\\\\
b_1(\tau_1)w\cos(w\tau_1)+[a_1(\tau_1)w^2-c_1]\cos(w\tau_1) = [w^3-b_2(\tau_1)w].
\end{array}
\end{equation}
With the help of auxiliary functions
\[
\begin{array}{l}
g_1(\tau_1) = b_1(\tau_1)w, \hspace{0.5cm} g_2(\tau_1) = a_1(\tau_1)w^2-c_1(\tau_1),\\\\
L_1(\tau_1) = a_2(\tau_1)w^2-c_2(\tau_1), \hspace{0.5cm} L_2(\tau_1) = w^3-b_2(\tau_1)w,
\end{array}
\]
the system of equations (\ref{sys:t2_zero}) can be re-written as follows
\begin{equation}\label{eq:primal_B}
\begin{array}{l}
g_1(\tau_1)\sin(w\tau_1)-g_2(\tau_1)\cos(w\tau_1) = L_1(\tau_1),\\\\
g_1(\tau_1)\cos(w\tau_1)+g_2(\tau_1)\sin(w\tau_1) = L_2(\tau_1).
\end{array}
\end{equation}
Solving this system yields
\begin{gather}
\begin{aligned}
& {\tau_1}^{(j)}(n) = \frac{1}{w_n}\left[{\tan}^{-1}U(\tau_1)+ (j-1)\pi\right],\qquad n = 1,2,3;\;\; j\in \mathbb{N}, \\
& U(\tau_1) = \frac{g_1(\tau_1)L_1(\tau_1)+L_2(\tau_1)g_2(\tau_1)}{L_2(\tau_1)g_1(\tau_1)-L_1(\tau_1)g_2(\tau_1)},
\end{aligned}
\end{gather}
 though, unlike the case of instantaneous maturity, $w_n$ is now itself the function of $\tau_1$, and hence, it does not prove possible to find the closed form expression for the critical time delay $\tau_1^*$.\\

\noindent{\bf Remark.} In the case when maturation delay and the PTGS propagation delay coincide, i.e. $\tau_1=\tau_2=\tau$, the characteristic equation (\ref{mother}) once again becomes an equation with a single time delay. However, similar to the case we have just considered, the critical value of the time delay can only be found implicitly, as the coefficients of the characteristic equation  themselves depend on the time delay. In the case where both $\tau_1>0$ and $\tau_2>0$, application of a methodology discussed in Gu {\it et al.} \cite{Gu2005} and Blyuss {\it et al.} \cite{Blyuss2008}, would provide a parameterisation of critical time delays but such parameterisation would also be implicit.

\section{Numerical stability analysis and simulations}

In order to gain a better insight into how different parameters affect biological feasibility and stability of different steady states, as well as to understand the dynamics inside stability regions, especially when $\tau_{1,2}>0$, we use a Matlab package traceDDE \cite{Breda2006} to numerically compute eigenvalues of the characteristic equation (\ref{mother}). Since PTGS is known to be a complex multi-component process, obtaining accurate values for parameters to be used in the model is very problematic, especially since there is a significant variation in reported values for many of the parameters, and some cannot currently be measured \cite{Melnyk,Liang, Himber15}. In light of this, we accompany theoretical analysis from the previous sections by an extensive numerical bifurcation analysis of the model to illustrate different types of behaviour that can be exhibited when the system parameters are varied. This provides qualitative insights into possible dynamics, which can be further improved once more advanced measurement techniques are developed, and the precise mechanisms of PTGS are elucidated.

\begin{figure}[h]
\centerline{\includegraphics[scale=0.6]{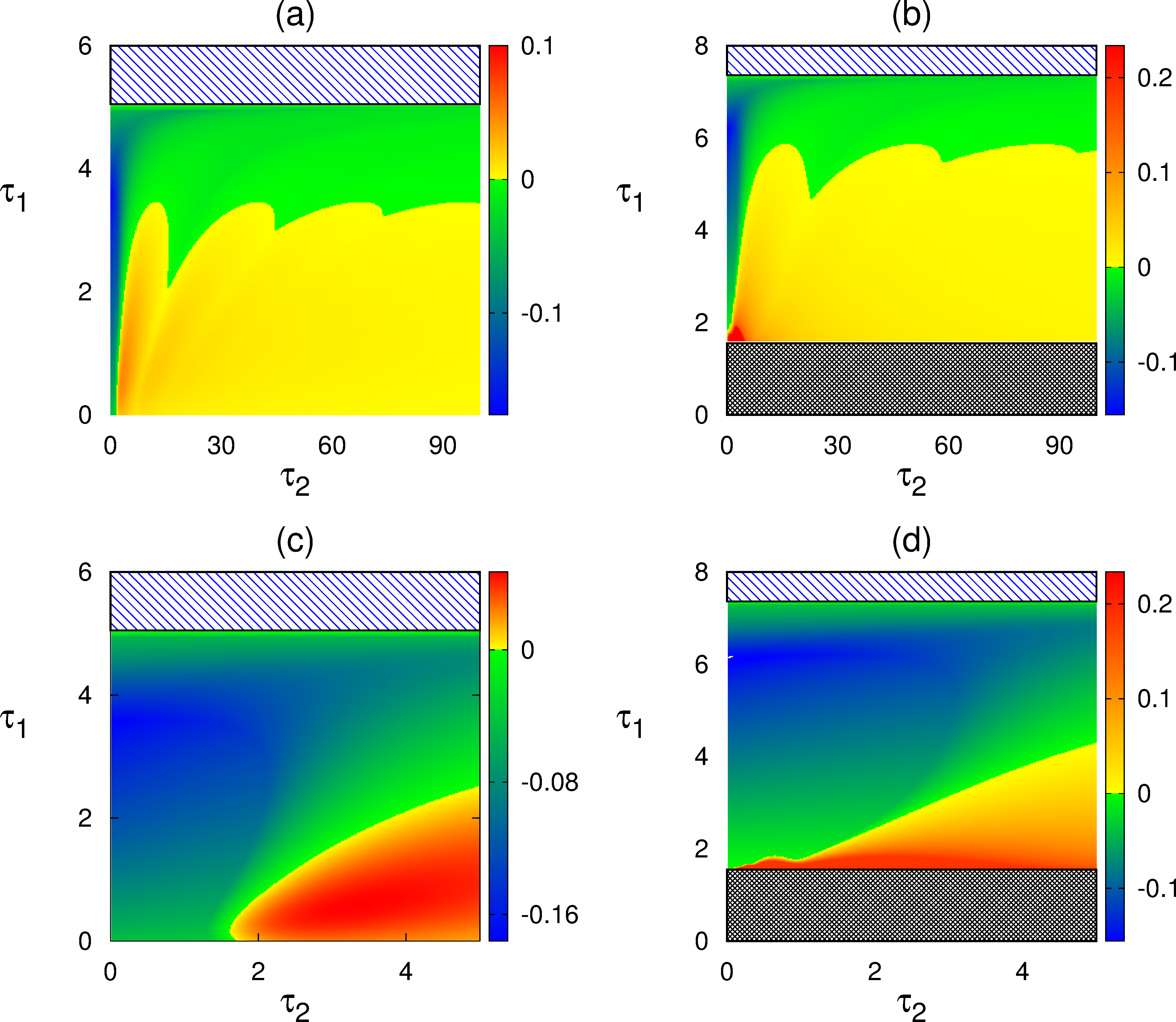}}
\caption{Stability of the endemic and disease-free steady states with parameter values from Table \ref{tab:List of parameters}.  (a) and (c) $k = 1$. (b) and (d) $k = 2$. Diagonal blue indicates the region where the disease-free steady state is asymptotically stable, and the endemic steady state is not feasible. The black  grid shows the region where the endemic steady state is not feasible, and none of the steady states is stable. Colour code denotes max[Re$(\mu)$] for the endemic steady state when it is feasible.}
\label{fig:2x2}
\end{figure}

Figure \ref{fig:2x2} shows the regions of stability of the disease-free steady state, as well as feasibility and stability of the endemic steady state. For parameter values specified in Table \ref{tab:List of parameters} and $k=1$, it follows from Theorems \ref{Theorem:Disease_free} and \ref{EndemicSS} that the endemic steady state is only feasible for $\tau_1\in \left[0,5.05\right)$, whereas for $\tau_1 \ge 5.05$, the endemic steady state disappears, and the disease-free steady state becomes asymptotically stable, as shown in Figs.~\ref{fig:2x2} (a) and (c). When the growth rate $k$ is increased, a qualitatively similar picture is observed, however, there is some minimum value of $\tau_1$, below which the endemic steady state is not biologically feasible. Figs.~\ref{fig:2x2} (b) and (d) illustrate that in this case, the endemic steady state is only feasible for $\tau_1\in \left[1.54,7.36\right)$, and for $\tau_1 \ge 7.36$ the disease-free steady state is asymptotically stable. This figure suggests that by adequately increasing the time delay $\tau_2$ after which susceptible cells  acquire immunity, the endemic steady state can generally become unstable, whereas there are regions in which the solution of the system alternates between the stable endemic steady state and  solutions of a periodic or possibly  chaotic nature. From a biological perspective, this is an interesting and a rather surprising result since intuitively one would expect that increasing the time delay associated with the spread of PTGS signal (i.e. time necessary to acquire immunity) would promote stabilization of the endemic steady state.
\begin{figure}
\centerline{\includegraphics[scale=0.6]{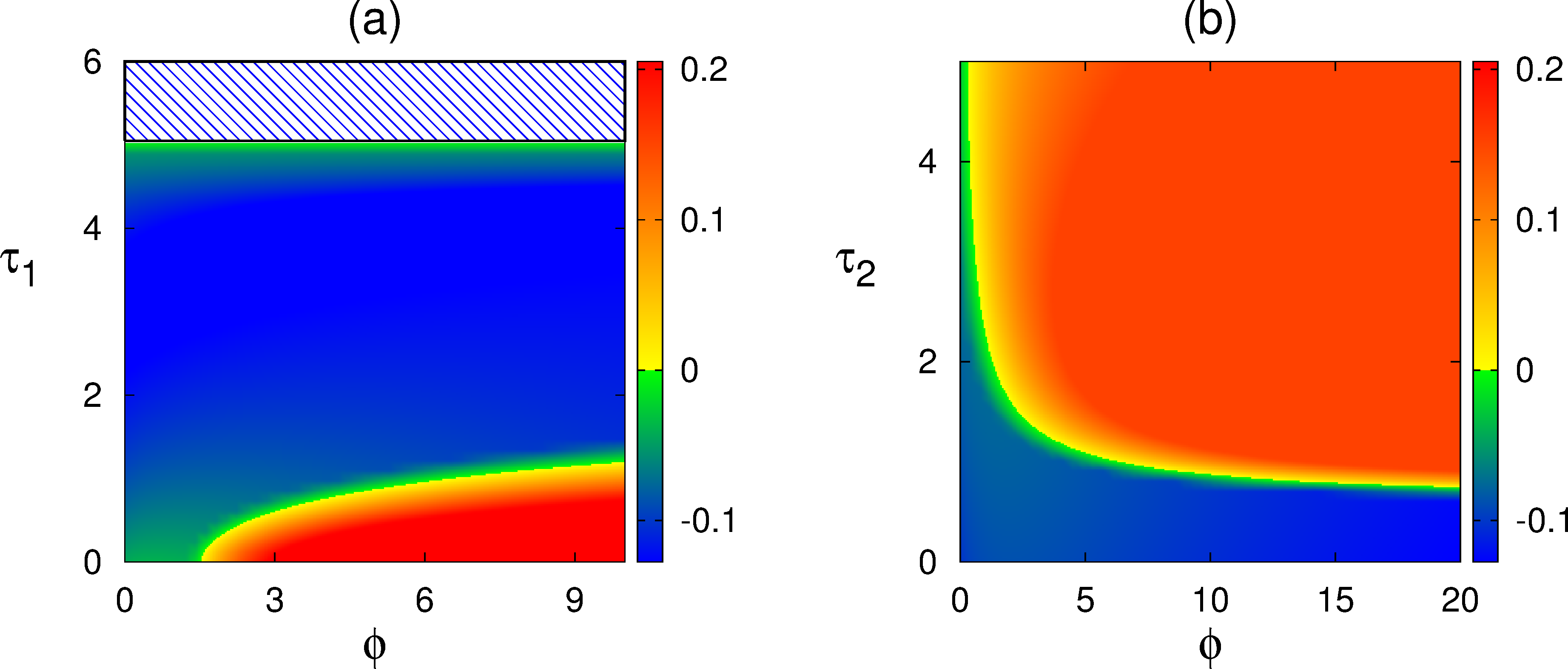}}
\caption{Stability of the endemic and disease-free steady states with parameter values from Table \ref{tab:List of parameters}. Diagonal blue indicates the region where the disease-free steady state is asymptotically stable, and the endemic steady state is not feasible. Colour code denotes max[Re$(\mu)$] for the endemic steady state when it is feasible.}
\label{fig:D}
\end{figure}

As a next step, we investigate how the relative values of the time delays and the amplification factor $\phi$, affect stability of the steady states. Figure~\ref{fig:D}(a) shows that the endemic steady state, when feasible, is asymptotically stable for sufficiently high values of the maturation delay $\tau_1$, but can lose stability once $\tau_1$ becomes lower than some critical value that is itself increasing with $\phi$. This implies that both the higher amplification factor and the faster maturation of the new plant tissue are prone to make the endemic steady state, characterised by some permanent level of infection, unstable. Figure~\ref{fig:D}(b) demonstrates the above-mentioned counter-intuitive result, which suggests that the endemic steady state is stable only for sufficiently fast-spreading PTGS signal, i.e. sufficiently small $\tau_2$. Due to the functional form of the term representing recovery of infected cells associated with the spreading PTGS signal, it is natural to expect that the critical time delay $\tau_2$ would be inversely proportional to $\phi$, and this is indeed what is observed in Fig.~\ref{fig:D}(b). It is noteworthy that in the parameter region where the endemic steady state is unstable, the disease-free steady state is also unstable. This highlights one of our earlier conclusions, namely that the amplification of recovery by the propagation of the warning signal, which in this case is transmitted from infected cells to other infected cells, has a limited impact on the outcome of the infection. Moreover, it is not by itself sufficient to achieve complete annihilation of the virus from its plant host.

\begin{figure}
\centerline{\includegraphics[scale=0.6]{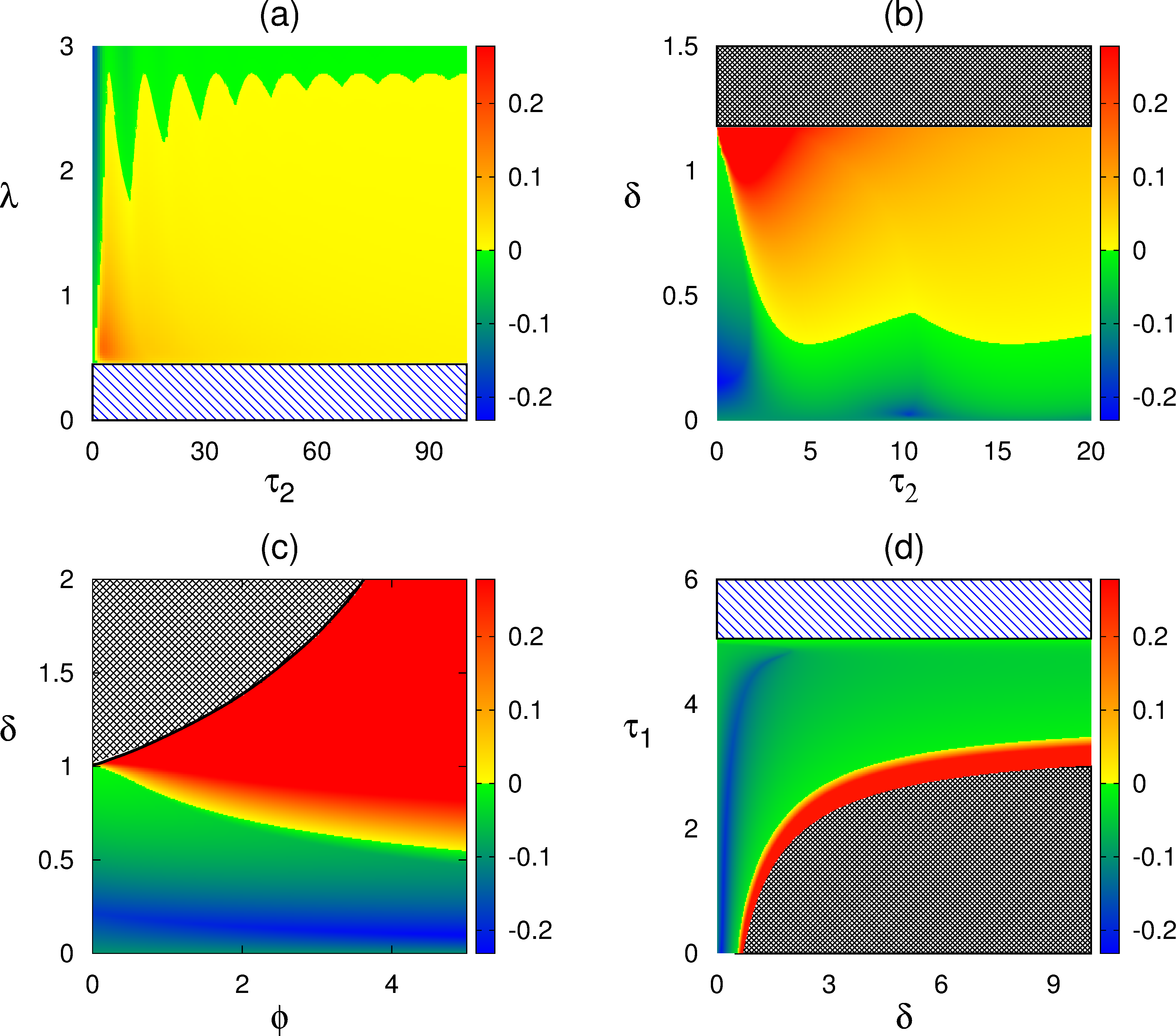}}
\caption{Stability of the endemic and disease-free steady states with parameter values from Table \ref{tab:List of parameters}. Diagonal blue indicates the region where the disease-free steady state is asymptotically stable, and the endemic steady state is not feasible. The black  grid shows the region where the endemic steady state is not feasible, and none of the steady states is stable. Colour code denotes max[Re$(\mu)$] for the endemic steady state when it is feasible.}
\label{fig:F}
\end{figure}
Figure~\ref{fig:F} shows that if the infection rate $\lambda$ is sufficiently small, or if the maturation of the growing tissue is sufficiently slow (i.e. $\tau_1$ is large), the disease-free steady state is asymptotically stable. On the other hand, if the infection rate is high, the endemic steady state is asymptotically stable, and the PTGS propagation delay $\tau_2$ becomes irrelevant to the long-term behaviour of the system. One can observe that for a sufficiently small warning rate $\delta$, the endemic steady state can be asymptotically stable for any value of $\tau_2$, whereas if $\delta$ is large enough, neither endemic, nor disease-free steady states are stable. The same happens in the case when the new plant tissue is maturing fast, i.e. $\tau_1$ is sufficiently small.

\begin{figure}
\centerline{\includegraphics[scale=0.6]{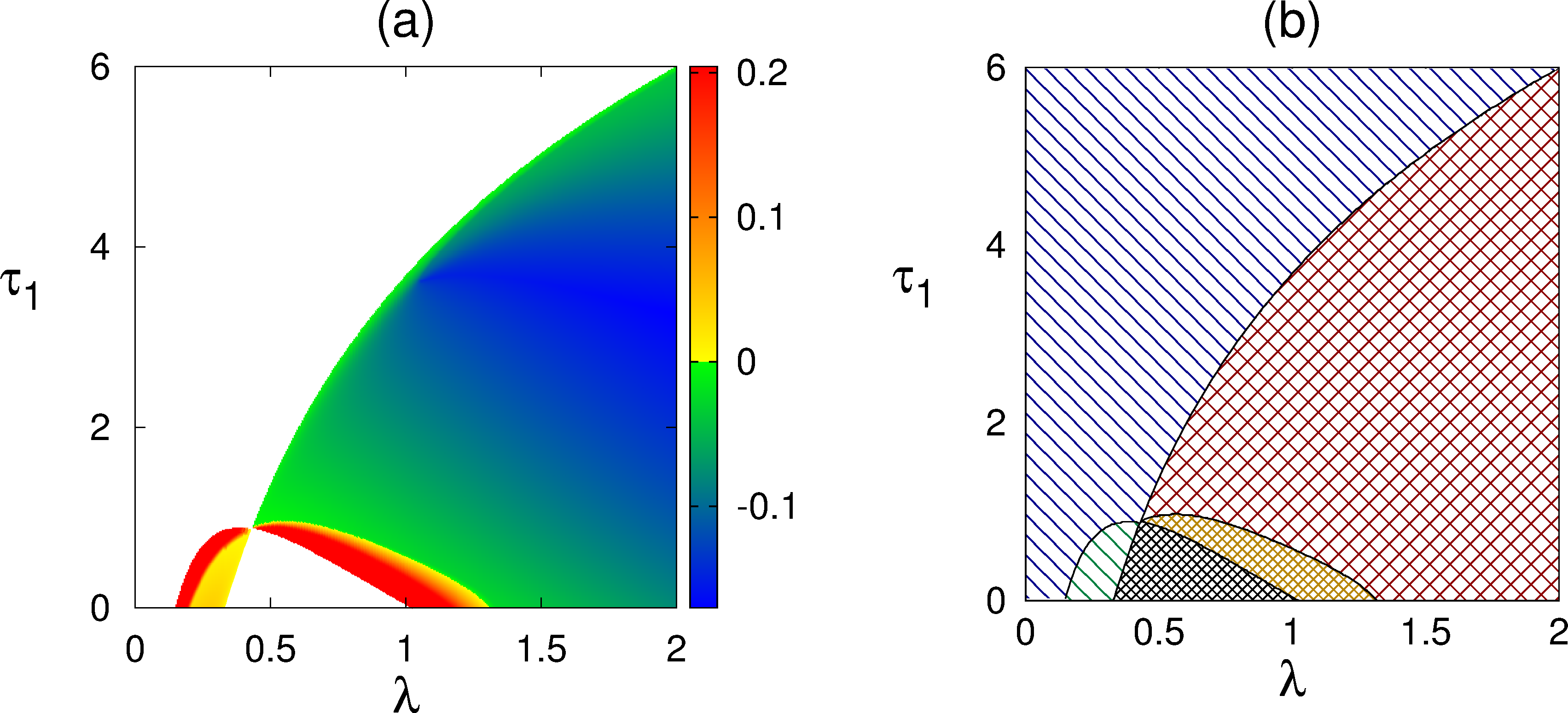}}
\caption{(a) Colour code denotes max[Re$(\mu)$] for the endemic steady state when it is feasible. (b) Stability regions of all steady states with parameter values from Table \ref{tab:List of parameters}. The black  grid shows the region where the endemic steady state is not feasible, and none of the steady states is stable. The area covered with  diagonal lines signifies the region where the disease-free steady state is asymptotically stable; in the region with green diagonal lines all steady states are feasible, whereas for  blue lines  the endemic steady state is not feasible. The red grid represents the area for which the endemic steady state is asymptotically stable. The brown grid shows  the region where both the endemic and disease free steady state are feasible but none are stable.}
\label{fig:C}
\end{figure}

In Figure \ref{fig:C} we have used the results from Theorems \ref{Theorem:Disease_free} and \ref{EndemicSS} to identify regions in which the system transitions from a  stable disease-free to the endemic steady state. When all other parameters remain fixed, this figure suggests that there is a minimum value of $\lambda$ for which the endemic steady state $E_2$ is asymptotically stable provided that the time delay $\tau_1$ is small enough. However, for any value of $\lambda$ below that threshold, either the system reverts back to the stable disease-free steady state, or the time-delay $\tau_1$ has to be within a specific range for the endemic steady state to be feasible and asymptotically stable. Our results up to this point suggest that $\tau_1$ is perhaps the most important bifurcation parameter in the model. From a biological perspective, this can be explained by interpreting the time delay $\tau_1$ as a temporary immunity inherent to the nature of proliferating and undifferentiated cells responsible for new growth. Equivalently, these results imply that whether or not the disease can successfully take over the plant depends on how fast the virus can gain access to  the newly formed parts of the plant. If the infection rate is not sufficiently high, the infected parts of the plant will eventually die out before the newer generation of cells becomes vulnerable to infection.

\begin{figure}
\begin{subfigure}{.5\textwidth}
  \centerline{\includegraphics[scale=0.63]{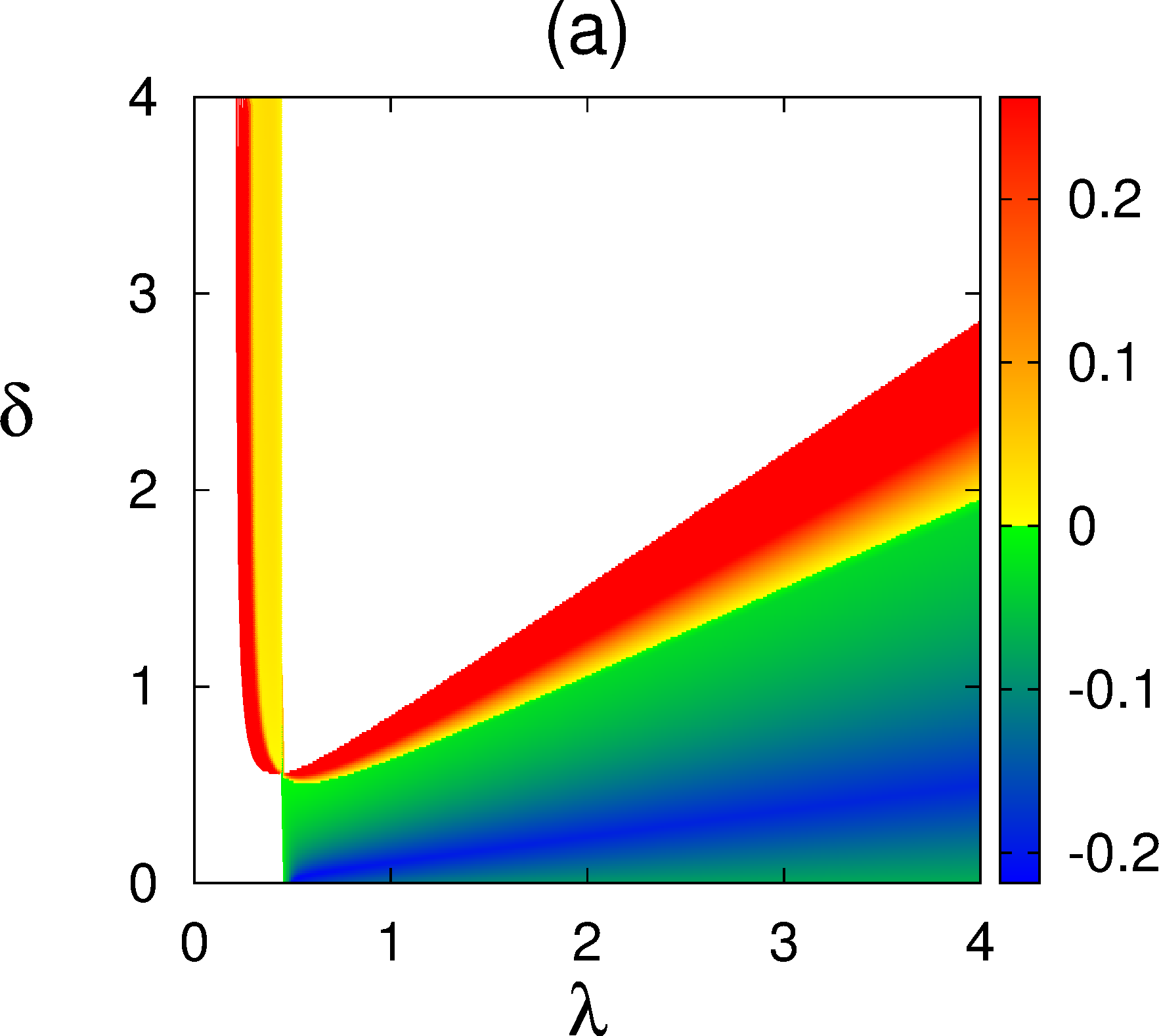}}
\end{subfigure}%
\hspace{0.3cm}
\begin{subfigure}{.5\textwidth}
  \centerline{\includegraphics[scale=0.67]{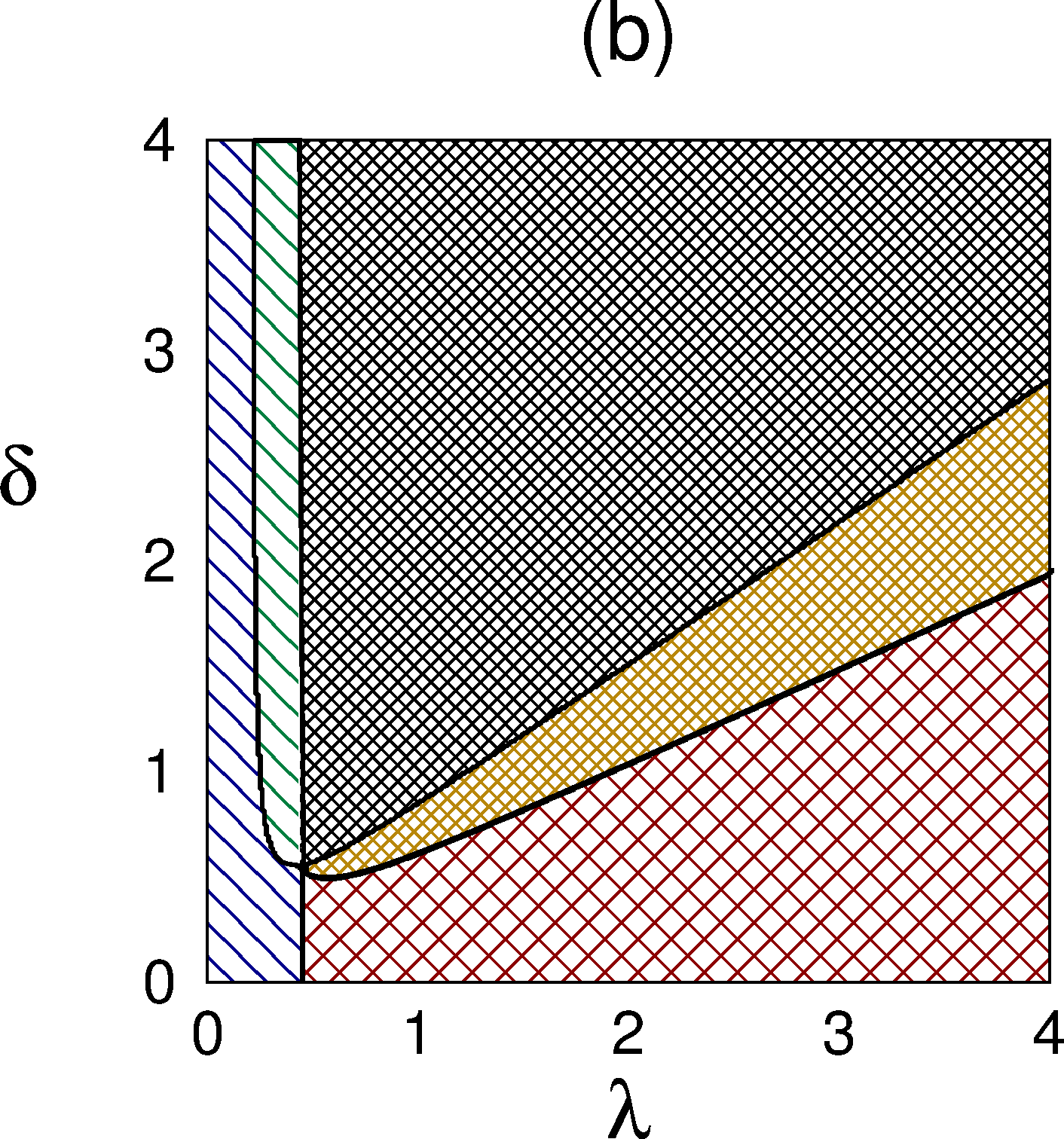}}
\end{subfigure}\\\\
\vspace*{0.7cm}
\begin{subfigure}{\textwidth}
  \centerline{\includegraphics[scale=0.65]{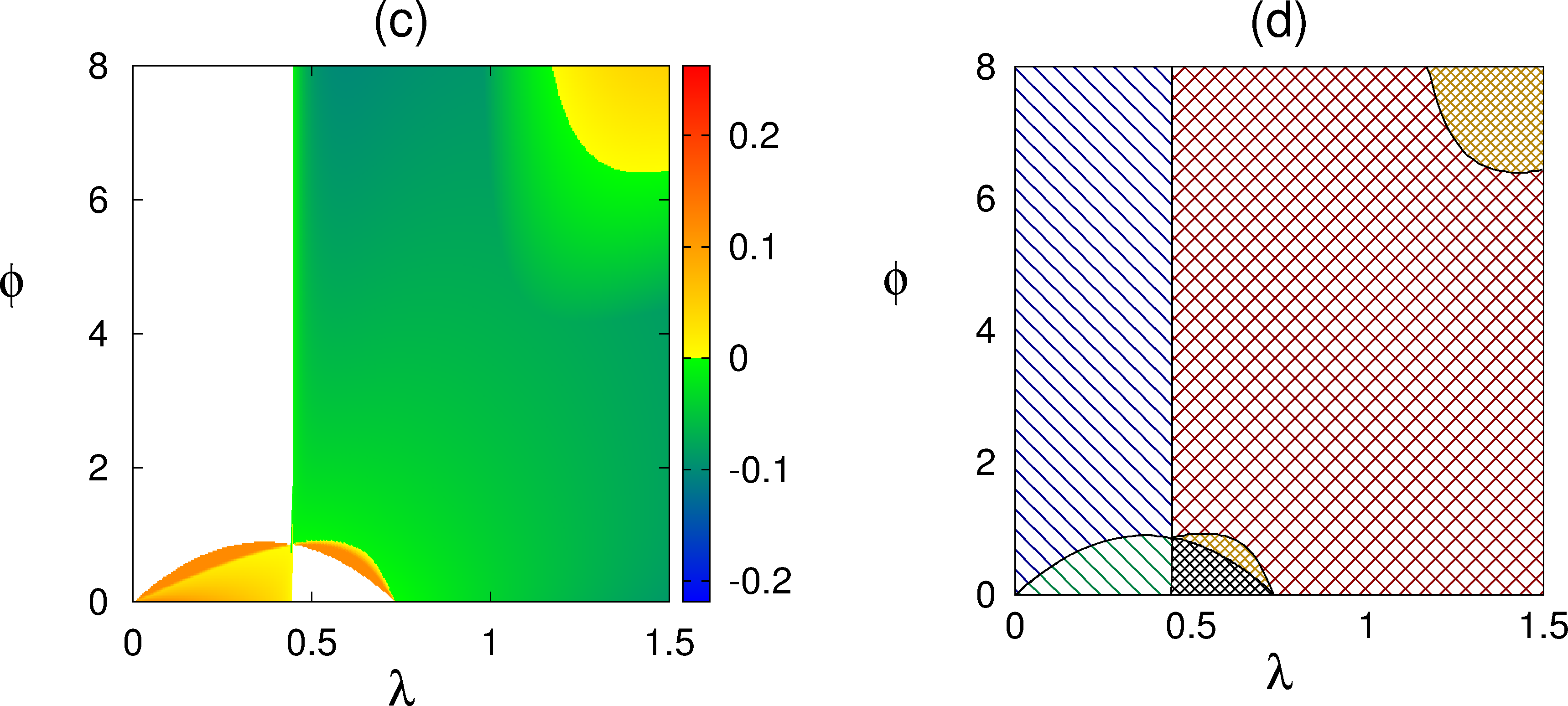}}
\end{subfigure}
\caption{(a), (c) Colour code denotes max[Re$(\mu)$] for the endemic steady state when it is feasible. (b), (d) Stability regions of all steady states with parameter values from Table \ref{tab:List of parameters}. The black  grid shows the region where the endemic steady state is not feasible, and none of the steady states is stable. The area covered with  diagonal lines signifies the region where the disease-free steady state is asymptotically stable; in the region with green diagonal lines all steady states are feasible, whereas for  blue lines  the endemic steady state is not feasible. The red grid represents the area for which the endemic steady state is asymptotically stable. The brown grid shows  the region where both the endemic and disease free steady state are feasible but none are stable.}
\label{fig:A}
\end{figure}

Figure~\ref{fig:A} illustrates the regions of feasibility and stability of the disease-free and endemic steady states when the time delays are fixed, and other parameters are allowed to vary. Naturally, the disease-free steady state is stable for lower values of the disease transmission rate $\lambda$, while for higher $\lambda$ there is a propensity for the endemic steady state to be stable. Higher speed of propagation of the PTGS signal $\delta$ and higher amplification factor $\phi$ lead to a de-stabilisation of the endemic steady state. It is worth noting the behaviour shown in Figs.~\ref{fig:A}(c) and (d), where for sufficiently high amplification rate, increase in the disease transmission rate $\lambda$ also destabilises the endemic steady state.

To illustrate different types of dynamical behaviour that can be exhibited by the model (\ref{eq:system_two}), we solve this system numerically with parameter values given in Table \ref{tab:List of parameters} and different values of the time delays $\tau_1$ and $\tau_2$. Figures~\ref{fig:num}(a), (c) and (d) demonstrate partial immune response that is not sufficient to eradicate the virus in the host. This is the type of behaviour one might expect from susceptible plants with a weak response against a viral disease, and it results in a chronic condition. Another possibility for a chronic infection is represented by periodic solutions shown in Fig.~\ref{fig:num}(b) and (e), where the severity of infection varies over time, with periods of high viral production being interspersed with periods of quiescence. From a biological perspective, these scenarios could be interpreted as situation where the evolutionary race between viral pathogen and the host immune system has not yet concluded, and as a result neither the plants immune system nor the virus' ability to suppress immune responses can prevail. Figure~\ref{fig:num}(f) demonstrates a type of immune response consistent with a recovery phenotype, where initially the disease appears to overwhelm the plant by infecting a dominating or a rather significant part of its body. However, as the warning signal propagates to surrounding cells, newly grown tissue and uninfected cells are able to acquire immunity and thus prevent the spread of the disease. This localizes the infection and eventually leads to the eradication of the invading virus, and consequently the system approaches a disease-free steady state. Similar type of behaviour is observed in the system with a very strong immunity response that would be consistent with highly resistant plants; in this case the infection is almost immediately localized due to the high efficacy of the propagating warning signal and the antiviral activity in the cells that are already infected.

\begin{figure}
\hspace{-0.5cm}
\includegraphics[width=14.5cm]{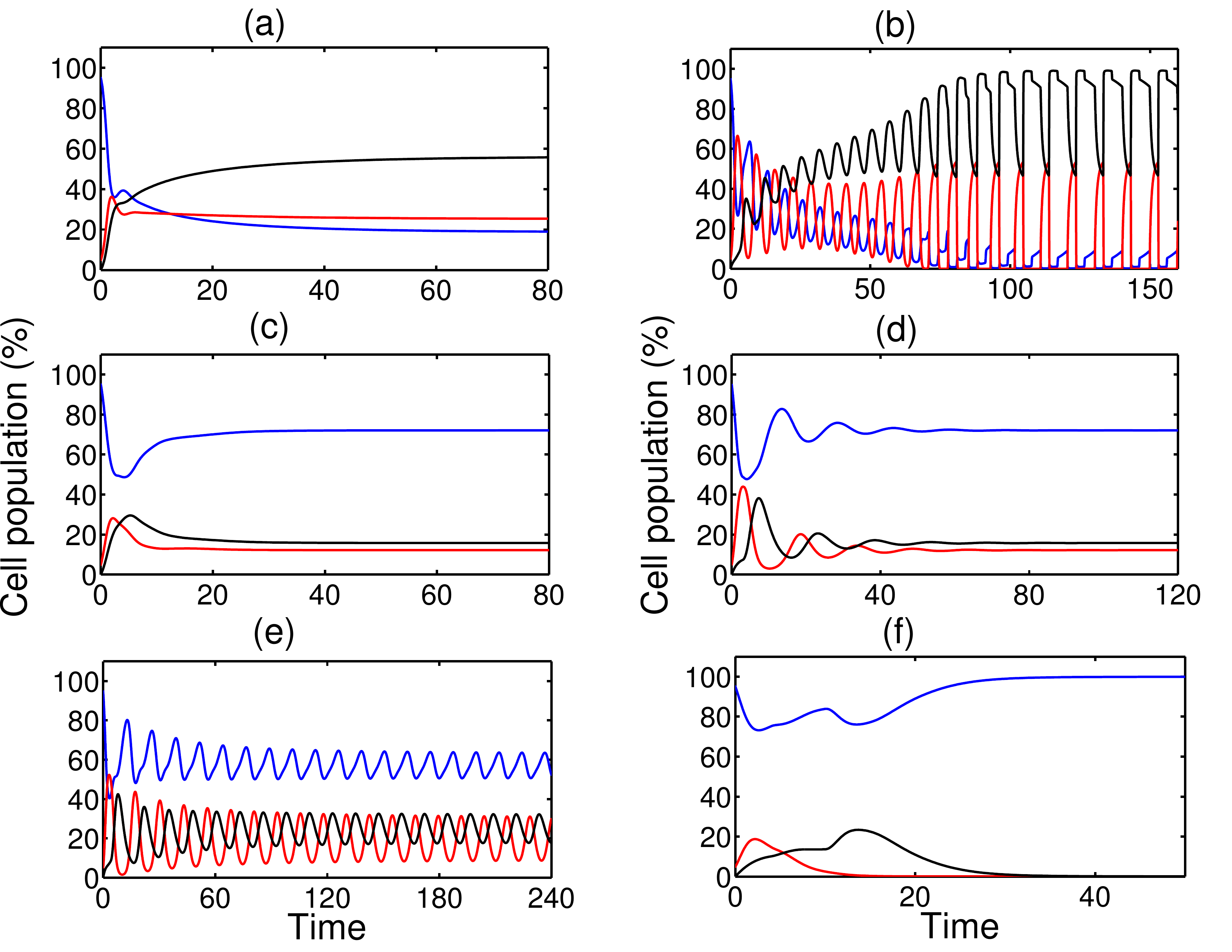}
\caption{Numerical solution of the system (\ref{eq:system_two}) with parameter values  from Table \ref{tab:List of parameters}. (a) $\tau_1=\tau_2=0$. (b) $\tau_1=0$, $\tau_2=3$. (c) $\tau_1=3$, $\tau_2=0$. (d) $\tau_1=\tau_2=3$. (e) $\tau_1=2$, $\tau_2=4$.
(f) $\tau_1=4$, $\tau_2=10$, $\sigma=1$, $\phi=0.1$. Colours represent scaled populations of susceptible $S$ (blue), infected $I$ (red) and warned $W$ cells (black).}
\label{fig:num}
\end{figure}

\section{Discussion}

In this paper we have developed and analysed a new mathematical model of the plant immune response to a viral infection, with particular emphasis on the role of RNA interference. To achieve better biological realism, this model explicitly includes two different time delays, one to represent maturation period of undifferentiated cells which effectively acts as a form of inherent immunity against infection, and another to account for the time required for the PTGS signal to reach other parts of the plant resulting in either recovery or warning of susceptible cells.

Stability analysis of the model has demonstrated the role played by system parameters in the dynamics. In the present model, it is impossible for all plant cells to die due to the constant emergence of new susceptible cells. Stability of the disease-free steady state appears to depend only on the maturation period but not on the speed of propagation or the strength of the PTGS signal, suggesting that a faster PTGS signal can at most help the plant to recover faster, but by itself it is not sufficient for a recovery. Endemic steady state, where the plant supports some constant level of infection, is only biologically feasible when the growth rate of the new tissue is higher than some minimum value. An interesting and counter-intuitive result is that slower PTGS signal (i.e. larger value of $\tau_2$) can actually lead to a destabilisation of this steady state, resulting in sustained periodic oscillations. Another possibility for the endemic steady state to lose stability is when the amplification factor $\phi$ increases, or the new uninfected tissue is produced faster, i.e. for a lower maturation time delay $\tau_1$.

Numerical simulations have shown that the model can support resistant- and recovery-type behaviours, whereby the plant immune system is able to mount sufficient response to eradicate the infection. Both of these situations are characterised by a strong localised immune response, but if additionally the warning signal is sufficiently strong, the plant exhibits the resistant phenotype, where the spread of infection is almost fully prevented, and the amount of the virus is diminished significantly faster than in the recovery case. On the other hand, if both the localised immune response and the propagating signal are sufficiently weak, the plant will be very susceptible to infection, however, the infection cannot result in the death of the host in our model. Periodic solutions of the model signify specific cases where the plant immune system cannot mount a sufficient response to eradicate the virus, and at the same time the virus also cannot adequately suppress the immune response of the plant. As a result, the plant undergoes periods of time in which the symptoms of the disease are manifested more prominently, with other periods where the infection is at a very low level.

Simulations suggest that the propagating component of the PTGS has a very limited impact on the long-term recovery of the plant. At the same time, the duration of maturation period of undifferentiated cells can play a very important role in controlling the spread of the infection, as it represents how fast the newly developed part of the plant becomes accessible to the virus. An interesting and practically important question is whether the model can be further improved by including some more realistic distribution of maturation periods in the way it has been done when modelling different distributions of temporary immunity \cite{BK10,YB14}, latency and incubation \cite{BT95,MCC}, or infectious periods \cite{RH07,ZLZ08}.

\section*{Acknowledgements} The authors would like to thank two anonymous reviewers for helpful comments and suggestions that have helped to improve the presentation in this paper.

\end{document}